\documentclass[12pt]{article} \usepackage{fullpage}
\usepackage{lineno}

\usepackage{nicefrac}
\usepackage{url}
\usepackage{color}
\usepackage[american]{babel}
\usepackage{graphicx}
\usepackage{version}
\usepackage{subfigure}
\usepackage[textsize=tiny]{todonotes}

\newif\iffull
\fulltrue

\graphicspath{{figures/}}

\usepackage{amsthm,amsmath,amssymb}
\newtheorem{theorem}{Theorem}
\newtheorem{conjecture}{Conjecture}
\newtheorem{corollary}{Corollary}


\newtheorem{observation}{Observation}

\newtheorem{lemma}{Lemma}

\newcommand{\student}[1]{}
\newcommand{\postdoc}[1]{}

\newcommand{\calC}{{\ensuremath{\cal C}}}

\newcommand{\odd}{{\ensuremath{\it odd}}}

\renewcommand{\subsubsection}[1]{\paragraph{#1}}

\newcommand{\leaveout}[1]{}

\date{}
\title{Matchings in 1-planar graphs with large minimum degree}
\author{Therese Biedl%
\thanks{David R.~Cheriton School of Computer
Science, University of Waterloo, Waterloo, Ontario N2L 1A2, Canada.
Work of TB supported by NSERC.  {\tt biedl@uwaterloo.ca} }
\and John Wittnebel%
\addtocounter{footnote}{-1}\footnotemark
}

\begin{document}

\maketitle
\begin{abstract}
In 1979, Nishizeki and Baybars showed 
that every planar graph with minimum degree 3
has a matching of size $\frac{n}{3}+c$ (where the
constant $c$ depends on the connectivity), and even better
bounds hold for planar graphs with minimum degree 4 and 5.
In this paper, we investigate
similar matching-bounds for {\em 1-planar} graphs, i.e.,
graphs that can be drawn such that every edge has at most one
crossing.  We show that every 1-planar graph with minimum
degree 3 has a matching of size at least $\frac{1}{7}n+\frac{12}{7}$,
and this is tight for some graphs.  We provide similar bounds
for 1-planar graphs with minimum degree 4 and 5, while the
case of minimum degree 6 and 7 remains open.
\end{abstract}


\section{Introduction}

Matchings are one of the oldest and best-studied problems in graph
theory, see for example the extensive reviews of matching theory in
\cite{LP86,Berge73}.
We focus here on matchings in graph classes that are restricted to
have special drawings.  In particular, a graph is called planar
if it can be drawn without crossing in the plane (detailed definitions
are below).  Nishizeki and Baybars \cite{NB79}
argued that every simple planar graph with $n\geq X$ vertices has a matching
of size at least $Yn+Z$, where $X,Y,Z$ depend on the minimum degree $\delta$
and the connectivity $\kappa$ of the graph (they explore all possibilities
of $\delta$ and $\kappa$).  Their bounds are tight in the sense that
some planar graph that satisfies the restrictions has no bigger matching.

The goal of this paper is to develop similar results for simple 1-planar graphs,
i.e., graphs that can be drawn in the plane with at most one crossing per
edge.  These graphs have been of high interest to the graph theory community
ever since Ringel introduced them in 1965 \cite{Ringel1965}; we refer the
reader to an extensive annotated bibliography \cite{KLM17} for
many results.    To our knowledge, no previous matching-bounds were known
for 1-planar graphs of given minimum degree.   We prove here the following:

\begin{theorem}
\label{thm:main}
Any $n$-vertex simple 1-planar graph with minimum degree $\delta$ has a matching $M$ of the following size:
\begin{enumerate}
\vspace*{-2mm}
\itemsep -2pt
\item $|M|\geq \frac{n+12}{7}$ if $\delta=3$ and $n\geq 7$.
\item $|M|\geq \frac{n+4}{3}$ if $\delta=4$ and $n\geq 20$.
\item $|M|\geq \frac{2n+3}{5}$ if $\delta=5$ and $n\geq 21$.
\end{enumerate}
\end{theorem}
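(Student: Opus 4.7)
The natural approach is via the Berge--Tutte deficiency formula
\[
2|M| \;=\; n \;-\; \max_{U\subseteq V(G)}\bigl(o(G-U)-|U|\bigr),
\]
arguing by contradiction: if $|M|$ is strictly smaller than the claimed bound, then some set $U\subseteq V(G)$ satisfies $o(G-U)-|U|\geq n-2|M|+1$, and I would aim to refute the existence of such a $U$ using the sparsity of 1-planar graphs together with the minimum-degree hypothesis.

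Write $u:=|U|$, let $S$ be the set of singleton components of $G-U$ with $s:=|S|$, and let $r$ be the number of remaining vertices of $V(G)\setminus U$ (those lying in non-singleton components). Every $v\in S$ has all $\deg_G(v)\geq\delta$ of its neighbors in $U$, so the bipartite subgraph $H$ between $U$ and $S$ contains at least $\delta s$ edges. Each non-singleton odd component has at least $3$ vertices, hence the total number $q$ of odd components of $G-U$ satisfies $q\leq s+r/3$. Combining this with $n\geq u+s+r$ translates a hypothetical violation of the theorem into a linear inequality that the quadruple $(u,s,r,n)$ must satisfy; for instance part~(1) forces any counterexample to satisfy $s>6u-12+\tfrac{4}{3}r$ after routine algebra.

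The key tools are the 1-planar edge bounds $|E|\leq 4N-8$ for simple 1-planar graphs on $N$ vertices (Pach--T\'oth), and the stronger bipartite bound $|E|\leq 3N-8$ for simple bipartite 1-planar graphs (Karpov). Since $H$ inherits a 1-planar drawing from $G$ and is bipartite, $\delta s\leq 3(u+s)-8$. For $\delta\in\{4,5\}$ this directly gives $s\leq (3u-8)/(\delta-3)$; plugging into the deficiency inequality and using the general 1-planar bound together with $\sum_v\deg(v)\geq\delta n$ to control $r$ should yield parts~(2) and~(3) by routine algebra, with the thresholds $n\geq 20,21$ arising from absorbing the additive $-8$ terms.

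The main obstacle is the case $\delta=3$, where the bipartite bound degenerates to $u\geq 3$ and provides no linear control of $s$ in terms of $u$, yet part~(1) demands exactly such control. To bridge this gap I would contract each non-singleton component of $G-U$ to a single vertex, obtaining a 1-planar graph $G'$ in which $S$ remains independent and contracted vertices have many edges into $U$; a discharging or crossing-counting argument on $G'$, together with a case analysis of the possible 1-planar subdrawings around a degree-$3$ vertex of $S$ (ruling out dense $K_{s,t}$-subdrawings that would violate 1-planarity), should produce the required linear inequality of the form $s\leq 6u+O(1)+O(r)$. Calibrating the additive constant $+12$ exactly, and verifying the base case $n\geq 7$ directly, will be the most delicate parts of the argument.
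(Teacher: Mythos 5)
There is a genuine gap, and it occurs precisely where you declare the work ``routine.'' Your plan for parts (2) and (3) is to feed the bipartite 1-planar edge bound $|E(H)|\leq 3(u+s)-8$ into the deficiency inequality. For $\delta=4$ this gives only $s\leq 3u-8$, but the bound $|M|\geq\frac{n+4}{3}$ requires $s\leq 2u-4$ (after the non-singleton components are accounted for), so the inequality you obtain is strictly too weak whenever $u>4$. For $\delta=5$ the situation is worse: your crude count $q\leq s+r/3$ for non-singleton odd components, combined with $2s\leq 3u-8$, leaves an uncontrolled $+\frac{2}{3}r$ term, and one cannot simply delete the size-3 components either (each deletion buys only $\frac{3}{5}<1$ in the target bound). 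The paper states this obstruction explicitly in its introduction: plugging the $3N-O(1)$ bipartite 1-planar edge bound into the Nishizeki--Baybars scheme yields \emph{no} nontrivial matching bound for any $\delta\leq 5$. The actual proof replaces edge counting by a discharging argument (Lemma~\ref{lem:bipartite}) showing $2|T_3|+6|T_4|+9\sum_{d\geq 5}|T_d|\leq 12|S|-24$ for an independent set $T$ of minimum degree 3; this single lemma is what drives all three parts, including the stronger coefficients needed for $\delta=4,5$. You correctly sense that a discharging/crossing-counting argument is needed for $\delta=3$, but you only gesture at it (``should produce the required linear inequality''), and that argument --- adding uncrossed edges along crossings, inserting auxiliary vertices into regions with three $T$-corners, and charging $6/3/2$ to uncrossed/crossed/auxiliary edges --- is the entire technical content of the paper.

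A second concrete error: you propose to contract each non-singleton component of $G-U$ to a single vertex and work in the resulting ``1-planar graph $G'$.'' Unlike planarity, 1-planarity is not preserved under contraction, and the paper explicitly notes this as the reason it cannot follow Nishizeki--Baybars here. The correct moves are to \emph{delete} odd components of size $\geq 3$ when $\delta\in\{3,4\}$ (each deletion removes one odd component and at least three vertices, which only helps since the target coefficient is $\geq\frac{1}{3}$), and for $\delta=5$ to keep the size-3 components, delete their internal edges so that their vertices join the independent set with degree still $\geq 3$, and then invoke the discharging lemma with the mixed degree classes $T_3,T_4,T_{\geq 5}$.
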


All of our bounds are tight in the sense that there are arbitrarily
large simple 1-planar graphs  of the required minimum degree for which no
matching can be larger.  We also provide some simple 1-planar graphs with minimum
degree 6 and 7 with upper bounds on the size of their matchings, though
proving that these can always be achieved remains open.  No 
simple 1-planar graph can have minimum degree 8 or higher.

Our proofs follow similar ideas as the proofs in Nishizeki and Baybars,
but need some new results that do not immediately transfer from planar
to 1-planar graphs.  In particular, at the heart of the proofs in \cite{NB79}
lies the idea that a planar bipartite graph has at most $2n-4$ edges.  
It is known that every 1-planar bipartite graph has at most $3n-6$ edges,
but inserting this into the proof from \cite{NB79} would give
no non-trivial matching-bounds for $\delta\leq 5$.  We therefore need
to develop different techniques to analyze how big an independent set
in a 1-planar graph can be, given bounds on the minimum degree;
this result may be interesting in its own right.

\iffull
\else
\fi

\subsubsection{Preliminaries}
\label{sec:definitions}

Let $G=(V,E)$ be a graph with $n$ vertices and $m$ edges; to avoid
trivialities assume $n\geq 4$ throughout.    We also assume familiarity
with basic terms in graph theory; see e.g.~\cite{Die12} for details.
A {\em matching} of $G$ is a set of edges for which the endpoints
are all distinct.   An {\em independent set} of $G$ is a set of
vertices without edges between them.
We assume that the input graph $G$ is {\em simple},
i.e., has neither a loop nor a multiple edge.
It will sometimes be convenient to add multiple edges, but only
under restrictions specified below. 
$G$ is {\em connected} if any two vertices are connected via
a path.  
The {\em connectivity} of $G$ is the maximum number $\kappa$ such
that removing any $\kappa-1$ vertices leaves a connected graph.
A {\em component} of $G$ is a maximal connected subgraph; we
call it a {\em singleton} if it has only one vertex.
A {\em bipartite} graph is a graph $G=(V,E)$ where the vertices can
be partitioned into {\em sides} $V=S\cup T$ such that each side
is an independent set.

Nearly all papers that give lower bounds on matching-sizes
(see e.g.~\cite{NB79,Dill90}) use the Tutte-Berge-Formula
\cite{Berge1958}.

\begin{theorem}[Tutte-Berge]
\label{thm:TB}
The size of a maximum matching $M$ equals the minimum, over
all vertex-sets $S$, of $\frac{1}{2}(n-(\odd(G\setminus S)-|S|))$.  Here,
$\odd(G\setminus S)$ denotes the number of components of odd cardinality in
the graph $G\setminus S$.  
\end{theorem}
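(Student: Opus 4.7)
I would prove the two directions of the equality separately: the easy upper bound $|M| \le \tfrac{1}{2}(n - (\odd(G\setminus S) - |S|))$ for every matching $M$ and every $S \subseteq V$, and the harder matching-lower-bound exhibiting an $S$ that attains equality.

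For the upper bound, fix a matching $M$ and any $S \subseteq V$. In each odd component $C$ of $G\setminus S$, the odd parity of $|V(C)|$ prevents $M$ from perfectly matching $C$ internally, so at least one vertex of $C$ is either unmatched by $M$ or matched by $M$ to a vertex of $S$. At most $|S|$ of these components can be ``absorbed'' by $S$, so at least $\odd(G\setminus S) - |S|$ vertices are unmatched, giving $2|M|\le n - (\odd(G\setminus S) - |S|)$.

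For the lower bound I would reduce to Tutte's $1$-factor theorem. Set $d = \max_S(\odd(G\setminus S) - |S|)$; since $\odd(G\setminus S)\equiv n - |S|\pmod 2$ (every even component contributes evenly to $n - |S|$), one gets $d\equiv n\pmod 2$, so $n + d$ is even. Form $G'$ by adding an independent set $U$ of $d$ new vertices, each joined to every vertex of $V$. A case analysis on whether $U\subseteq S'$ verifies Tutte's condition $\odd(G'\setminus S')\le |S'|$ for every $S'\subseteq V(G')$: if $U\subseteq S'$, then with $T = S'\cap V$ we have $G'\setminus S' = G\setminus T$ and by definition of $d$, $\odd(G\setminus T)\le |T|+d = |S'|$; if $U\not\subseteq S'$, the vertices of $V\setminus S'$ all lie in a single component through $U\setminus S'$, and a short parity argument (using $n+d$ even and the easy bound $d\le n$) disposes of the remaining subcases. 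By Tutte's theorem $G'$ then has a perfect matching $M'$; each of the $d$ vertices in $U$ is matched by $M'$ into $V$, so the remaining $n - d$ vertices of $V$ are matched through edges of $G$, yielding a matching of $G$ of size $(n-d)/2$ that attains the upper bound.

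The main obstacle will be the case analysis verifying Tutte's condition, in particular the parity bookkeeping in the boundary cases $V\subseteq S'$ and $S' = \emptyset$, where the ``single component through $U\setminus S'$'' picture degenerates and one must instead bound $|U\setminus S'|$ against $|S'|$ directly. A fully self-contained proof avoiding Tutte's theorem would instead induct on $|V|+|E|$ via augmenting paths, or appeal to the Gallai--Edmonds structure theorem; both are longer but elementary.
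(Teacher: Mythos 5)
The paper does not prove this statement at all: Theorem~\ref{thm:TB} is the classical Tutte--Berge formula, quoted with a citation to Berge (1958) and used as a black box (only its easy direction is reformulated as Corollary~\ref{cor:TB}). So there is no in-paper argument to compare against; what you have written is the standard textbook derivation, and it is sound. The upper-bound count is correct: each odd component of $G\setminus S$ contributes a vertex that is unmatched or matched into $S$, and at most $|S|$ of the latter can occur, so at least $\odd(G\setminus S)-|S|$ vertices miss $M$. The reduction of the other direction to Tutte's $1$-factor theorem is also the usual one; the parity observation $\odd(G\setminus S)\equiv n-|S|\pmod 2$ and the bound $d\le n$ are exactly what is needed, and you have correctly isolated the two degenerate cases ($S'=\emptyset$, where the single component of $G'$ has even order $n+d$; and $V\subseteq S'$, where $G'\setminus S'$ consists of $|U\setminus S'|\le d\le n\le |S'|$ singletons). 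One small remark: most presentations make $U$ a clique rather than an independent set; your choice of an independent $U$ works equally well and even makes the final step (``every vertex of $U$ is matched into $V$'') automatic rather than requiring a separate argument, at the price of the $V\subseteq S'$ case you already flagged. A fully written-out version would need to fill in that case analysis, but no idea is missing.
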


To prove a lower bound on a matching, one uses the following reformulation of the
non-trivial direction of Theorem~\ref{thm:TB}.

\begin{corollary} 
\label{cor:TB}
If $G=(V,E)$ is a graph such that $\odd(G\setminus S)-|S|\leq cn-d$
for all vertex-sets $S\subset V$ and some constants $c,d$, then $G$ has a matching of
size at least $\frac{1-c}{2}n + \frac{d}{2}$.
\end{corollary}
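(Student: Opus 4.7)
The plan is to apply the Tutte-Berge formula (Theorem~\ref{thm:TB}) directly and substitute in the hypothesized upper bound. First, I would write
$|M| = \min_{S\subseteq V}\tfrac{1}{2}\bigl(n - (\odd(G\setminus S) - |S|)\bigr)$
from Theorem~\ref{thm:TB}. Since this is an equality, any uniform lower bound for the expression inside the minimum immediately yields a lower bound for $|M|$.

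Next, I would invoke the hypothesis $\odd(G\setminus S) - |S| \leq cn - d$, which holds for every $S\subseteq V$. Subtracting from $n$ (which reverses the inequality) and dividing by two yields
$\tfrac{1}{2}\bigl(n - (\odd(G\setminus S) - |S|)\bigr) \;\geq\; \tfrac{1}{2}\bigl(n - (cn-d)\bigr) \;=\; \tfrac{1-c}{2}n + \tfrac{d}{2}$
for every $S$. Taking the minimum over $S$ on the left gives $|M|\geq \tfrac{1-c}{2}n + \tfrac{d}{2}$, which is the desired conclusion.

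There is no real obstacle here: the corollary is essentially a rephrasing of the easy direction of Tutte-Berge into a form that is convenient for the rest of the paper, where the task will reduce to bounding $\odd(G\setminus S)-|S|$ for $S\subseteq V$ in a 1-planar graph of prescribed minimum degree. The only minor thing to check is that the hypothesis is required to hold for \emph{all} $S\subseteq V$ (including $S=\emptyset$ and $S=V$), which is already built into the statement.
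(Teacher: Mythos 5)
Your proof is correct and is exactly the intended argument: the paper states this corollary without proof precisely because it follows by substituting the hypothesis directly into the Tutte--Berge formula of Theorem~\ref{thm:TB}, as you do. Nothing further is needed.
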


\subsubsection{Planar and 1-planar graphs}
A {\em drawing} $\Gamma$ of a graph $G$ assigns points in $\mathbb{R}^2$ to vertices and curves 
in $\mathbb{R}^2$ to edges.  In what follows, we usually identify the 
element of $G$ (i.e., vertex or edge) with the geometric element in $\Gamma$ (i.e., point or curve) that corresponds
to it.  All drawings are assumed to be {\em good} (see e.g.~\cite{SchaeferBook}
for a detailed discussion), which means among others that no two vertices
coincide, no edge intersects itself or a non-incident vertex, and any two edges
intersect in at most one point and where they either end or fully cross.
(For graphs that are not simple, multi-edges are permitted to
meet twice, once at each end.)  Whenever a drawing is fixed, we think
of the edges incident to a vertex $v$ as cyclically ordered according to
the order in which they end at $v$.

A drawing is called
{\em $k$-planar} if every edge has at most $k$ crossings; in this paper all
drawings are 1-planar and sometimes we restrict the attention to 0-planar
(``planar'') drawings.    
A graph is called {\em planar [1-planar]} if it has a planar [1-planar] drawing.

For the following definitions fix a planar drawing $\Gamma$.
The {\em faces} of $\Gamma$  are the connected pieces of $\mathbb{R}^2{\setminus} \Gamma$,
and described by giving the collection of circuits that form its boundary.
A {\em bigon} is a face whose boundary is a single cycle consisting of two
copies of the same edge.  Our input graph is assumed to be simple,
but we will sometimes add edges for counting arguments, and then allow
multi-edges, but never bigons.  We never allow loops.
Any graph with a planar bigon-free drawing has at most $3n-6$ edges, and
at most $2n-4$ edges if it is bipartite.

For the following definitions fix a 1-planar drawing $\Gamma$.
We call an edge {\em crossed} if it contains a crossing and {\em uncrossed} otherwise.
The {\em planarization} $\Gamma_P$ of $\Gamma$ is the planar
drawing obtained by replacing every crossing with a dummy-vertex of degree 4.  The
{\em regions} of $\Gamma$ are the faces of its planarization $\Gamma_P$,
the {\em corners} of a region of $\Gamma$ are the vertices of the corresponding
face of $\Gamma_P$; corners are vertices or crossings of $\Gamma$.   

A {\em bigon} of $\Gamma$
is a bigon of $\Gamma_P$.  Any bigon-free loop-free 1-planar graph has at most $4n-8$ 
edges, and at most $3n-6$ edges if it is bipartite.  We need a slightly more detailed
bound.  
\begin{observation}
\label{obs:edgecount}
Let $G$ be a bipartite graph with a 1-planar bigon-free drawing that has
$m_\times$ crossed and $m_-$ uncrossed edges.  Then $\frac{1}{2}m_\times + m_- \leq 2n-4$.
\end{observation}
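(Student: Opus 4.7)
My plan is to apply Euler's formula to the planarization $\Gamma_P$, together with a case analysis of which faces of $\Gamma_P$ can be triangles. Writing $c = m_\times/2$ for the number of crossings, $\Gamma_P$ has $V_P = n + c$ vertices and $E_P = m_- + 2m_\times$ edges, since each crossed edge is split at its unique crossing into two pieces. Because $\Gamma$ is bigon-free, every face of $\Gamma_P$ has length at least $3$. Letting $f_3$ be the number of triangular faces and $F$ the total number of faces, combining $4F - f_3 \leq \sum_f \mathrm{len}(f) = 2E_P$ with Euler's formula $F \geq 2 - V_P + E_P$ gives, after substituting the values of $V_P$ and $E_P$,
\[ 2m_- + 2m_\times \leq 4n - 8 + f_3, \]
so it suffices to prove the charging inequality $f_3 \leq m_\times$.

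This is the main obstacle. I classify each triangular face of $\Gamma_P$ by the number of its corners that are crossings. Zero crossing-corners would produce a $3$-cycle in $G$, contradicting bipartiteness. Two or three crossing-corners would force some boundary edge-piece of $\Gamma_P$ to run directly from one crossing to another; but then the underlying edge of $G$ has two crossings, contradicting $1$-planarity. So every triangular face has exactly one crossing corner $x$ and two vertex corners $u, v$, with the boundary consisting of the two half-edges from $x$ to $u$ and to $v$ together with an uncrossed edge $uv \in E$.

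It remains to bound, for a fixed crossing $x$ between edges $a_1a_2$ and $b_1b_2$, how many of the four sectors at $x$ host such a triangle. The four edge-pieces around $x$ alternate as $a_1, b_1, a_2, b_2$ in cyclic order, and a sector between $a_i$ and $b_j$ is a triangular face only if the chord $a_ib_j$ is an uncrossed edge of $G$. Bipartiteness forces $a_1, a_2$ and also $b_1, b_2$ to lie on opposite sides of the bipartition, and a short check shows that in either possible assignment of $b_1, b_2$ to the two sides, exactly two of the four pairs $\{a_i, b_j\}$ are monochromatic and therefore cannot be edges of $G$. Hence at most two sectors at $x$ host a triangular face, giving $f_3 \leq 2c = m_\times$. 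Plugging this into the displayed inequality yields $2m_- + m_\times \leq 4n - 8$, i.e., $\tfrac{1}{2}m_\times + m_- \leq 2n - 4$.
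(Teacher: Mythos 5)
Your proof is correct, but it takes a genuinely different route from the paper's. The paper's proof is a two-line reduction: the crossed edges come in pairs, and deleting one edge from each crossing pair yields a planar bipartite bigon-free drawing with $\frac12 m_\times + m_-$ edges, to which the standard bound of $2n-4$ edges applies directly. You instead work on the planarization itself: Euler's formula plus the face-length bound gives $2m_- + 2m_\times \le 4n-8+f_3$, and the heart of your argument is the charging claim $f_3 \le m_\times$, proved by showing that every triangular face of $\Gamma_P$ is a ``kite'' with exactly one crossing corner and that bipartiteness kills exactly two of the four candidate kites at each crossing. I checked the details --- the face classification (zero crossing corners contradicts bipartiteness, two or more contradicts 1-planarity), the alternation of the four edge-pieces around a crossing, and the parity check that exactly two of the four pairs $\{a_i,b_j\}$ are monochromatic --- and all of it is sound, including in the presence of the loop-free, bigon-free multi-edges that the paper later needs the observation for. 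What each approach buys: the paper's reduction is much shorter and leans on a known planar fact, whereas yours is self-contained, makes visible exactly where bipartiteness is used, and sidesteps having to verify that the deletion step preserves bigon-freeness; it would also adapt easily to the non-bipartite setting (where up to four kites per crossing give $f_3\le 2m_\times$ and hence the $4n-8$ bound quoted in the preliminaries).
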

\begin{proof}
The crossed edges come in pairs, and if we remove one edge from each pair then
we obtain a planar bipartite bigon-free drawing.  This has at most $2n-4$ edges, and so 
$\frac{1}{2}m_\times + m_- \leq 2n-4$.
\end{proof}

\section{1-planar graphs without large matchings}

In this section, we create some 1-planar graphs that have large minimum
degree and for which the maximum matching is small.  

\begin{figure}[ht]
\hspace*{\fill}
\subfigure[~]{\includegraphics[scale=0.7,page=1,trim=0 0 0 0,clip]{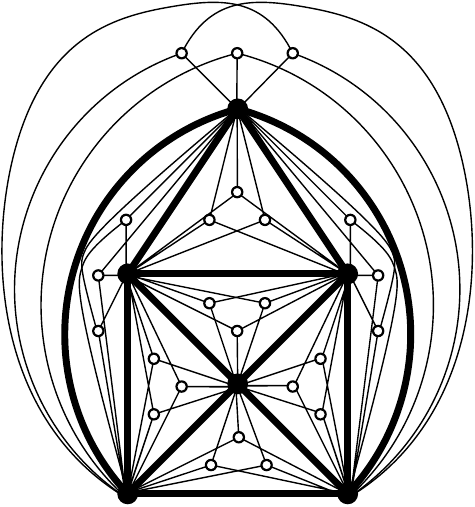}}
\hspace*{\fill}
\subfigure[~]{\includegraphics[scale=0.7,page=3,trim=0 0 0 0,clip]{tight.pdf}}
\hspace*{\fill}
\subfigure[~]{\includegraphics[scale=0.9,page=4,trim=20 0 40 0,clip]{tight.pdf}}
\hspace*{\fill}
\subfigure[~]{\includegraphics[scale=0.9,page=5,trim=10 0 10 0,clip]{tight.pdf}}
\hspace*{\fill}
\caption{Graphs that do not have large matchings.  Graph $H$
is bold, vertices in $S$ are black.  (a) Minimum degree 3.  (b and c) Minimum degree 4.  (d) Minimum degree 5.}
\label{fig:tight}
\end{figure}

\begin{lemma}
\label{lem:mindeg3}
For any $N$,
there exists a simple 1-planar graph with minimum degree 3 and $n\geq N$ vertices for which any matching
has size at most $\frac{n+12}{7}$.
\end{lemma}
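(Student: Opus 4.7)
The plan is to exhibit, for each sufficiently large $s$, a 1-planar graph $G$ on $n=7s-12$ vertices together with an explicit vertex set $S\subseteq V(G)$ of size $s$ such that $G\setminus S$ consists of $6s-12$ singletons. By Corollary~\ref{cor:TB} (or directly the Tutte-Berge formula) this immediately gives
\[
|M| \;\leq\; \frac{n-(\odd(G\setminus S)-|S|)}{2} \;=\; \frac{(7s-12)-(6s-12-s)}{2} \;=\; s \;=\; \frac{n+12}{7}.
\]
Choosing $s$ large enough that $7s-12\geq N$ then proves the lemma.

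For the construction, I would start with any planar triangulation $H$ on $s\geq 4$ vertices and set $S=V(H)$. Using a fixed planar drawing of $H$, note that $H$ has exactly $2s-4$ triangular faces. Into each face with corners $a,b,c$ I insert three new vertices $u,v,w$, each made adjacent precisely to $a,b,c$. The three new vertices in a face have degree $3$ by construction, and since no two of them are adjacent and they have no neighbors outside $S$, the vertices added over all faces become $3(2s-4)=6s-12$ isolated odd components of $G\setminus S$. The vertices of $H$ already have degree at least $3$ in $H$ (any triangulation with $s\geq 4$ vertices has minimum degree $\geq 3$), so $G$ has minimum degree $3$ as required. The vertex count is $s+3(2s-4)=7s-12=n$.

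The only nontrivial step — and the main obstacle — is to verify that this graph really admits a 1-planar drawing. I would draw the gadget in each triangular face by placing $u,v,w$ close to $a,b,c$ respectively, so that the three short edges $ua,vb,wc$ hug the boundary of the face and are uncrossed. The remaining six ``long'' edges can then be routed in three pairs that each cross exactly once: $ub\times va$, $uc\times wa$, and $vc\times wb$. A short geometric check (e.g., straight-line placement inside the triangle) confirms that no accidental crossings occur, that every new edge has at most one crossing, and that the gadget stays inside its face and does not interfere with the boundary edges of $H$ or with the gadgets in adjacent faces.

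Once 1-planarity is established, everything else is bookkeeping: counting vertices gives $n=7s-12$, counting odd components of $G\setminus V(H)$ gives $6s-12$, and the Tutte-Berge calculation displayed above yields the bound $|M|\leq (n+12)/7$. The figure referenced as Figure~\ref{fig:tight}(a) presumably depicts exactly this gadget in one face, and I would refer to it to convey the crossing pattern concisely.
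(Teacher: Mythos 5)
Your construction is exactly the one the paper uses (planar triangulation $H$ on $s$ vertices, three degree-3 vertices inserted per face, $S=V(H)$, Tutte--Berge with $n=7s-12$), and your explicit crossing pattern for the gadget matches the drawing the paper delegates to Figure~\ref{fig:tight}(a). The proof is correct and essentially identical to the paper's.
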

\begin{proof}
Consider the graph in Fig.~\ref{fig:tight}(a), which has been built as follows.
Start with an arbitrary planar graph $H$ on $s$ vertices, where $s\geq \max\{3,\frac{N+12}{7}\}$, such that all faces of $H$ are triangles.
Into each face $\{u,v,w\}$ of $H$, insert three more vertices that are
all adjacent to all of $u,v,w$.  Obviously the resulting graph $G$ has
minimum degree 3, and the figure shows that $G$ is 1-planar.
Also, $H$ has $2s-4$ faces, hence $G$ has $n=s+3(2s-4)=7s-12\geq N$ vertices.
Setting $S$ to be the $s$ vertices of $H$, we observe that every vertex
in $G\setminus S$ becomes a singleton component.  So 
$$\odd(G\setminus S)-|S| = 3(2s-4) - s = 5s-12 = \frac{5n-24}{7}.$$
By Theorem~\ref{thm:TB} therefore any matching has size at most $\frac{n+12}{7}$.
\end{proof}

\begin{lemma}
\label{lem:mindeg4}
For any $N$,
there exists a simple 1-planar graph with minimum degree 4 and $n\geq N$ vertices for which any matching
has size at most $\frac{n+4}{3}$.
\end{lemma}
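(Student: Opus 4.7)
The plan is to generalize the construction of Lemma~\ref{lem:mindeg3} by using a planar quadrangulation as the base instead of a triangulation, and inserting two degree-$4$ vertices per face instead of three degree-$3$ vertices per face. In this way every inserted vertex has all four of its neighbors in the base, and the two inserted vertices in a face will be non-adjacent, so they become isolated vertices once the base is removed.

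Concretely, given $N$, I would pick an integer $s$ with $s\geq 4$ and $3s-4\geq N$, and let $H$ be any $2$-connected planar quadrangulation on $s$ vertices. By Euler's formula $H$ has exactly $s-2$ quadrangular faces. Into each face with corners $u_1,u_2,u_3,u_4$ in cyclic order I insert two new non-adjacent vertices $a,b$, each joined to all four of $u_1,\dots,u_4$. I then draw $a$ near the side $u_1u_2$ and $b$ near the opposite side $u_3u_4$: the edges $au_1,au_2,bu_3,bu_4$ are short uncrossed segments, while the four ``diagonal'' edges $au_3,au_4,bu_1,bu_2$ produce exactly two crossings in the face ($au_3$ with $bu_2$, and $au_4$ with $bu_1$). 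Every new edge has at most one crossing, so the resulting drawing of $G$ is $1$-planar and bigon-free, and clearly simple.

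Now set $S=V(H)$. Every inserted vertex has degree $4$ by construction, and a base vertex $v$ lies on $d_H(v)\geq 2$ faces and gains two neighbors from each, so its degree in $G$ is $3\,d_H(v)\geq 6$; hence $G$ has minimum degree $4$. The graph has $n=s+2(s-2)=3s-4$ vertices. After removing $S$ only the $2(s-2)=2s-4$ inserted vertices remain, each isolated, so
\[
 \odd(G\setminus S)-|S| \;=\; (2s-4)-s \;=\; s-4 \;=\; \tfrac{n-8}{3}.
\]
Theorem~\ref{thm:TB} then yields $|M|\leq \tfrac12\bigl(n-(s-4)\bigr)=s=\tfrac{n+4}{3}$, as required.

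The step I expect to be the main obstacle is the $1$-planarity check of the two-vertex gadget inside each quadrangular face: one has to argue that the chosen routing of the two ``diagonals'' from $a$ and from $b$ can indeed be realized so that no edge picks up a second crossing and no bigon is created (in particular, the two crossings $au_3\times bu_2$ and $au_4\times bu_1$ must be arranged so that these four edges do not additionally intersect the short uncrossed edges at non-endpoints). Once that routing is in hand, the remaining pieces (Euler's formula for the face count of $H$, the degree count, and the Tutte--Berge calculation) are direct analogues of the steps in Lemma~\ref{lem:mindeg3}.
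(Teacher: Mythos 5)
Your construction is exactly the paper's: a planar quadrangulation $H$ on $s$ vertices with two new degree-$4$ vertices inserted per face, $S=V(H)$, giving $n=3s-4$ and $\odd(G\setminus S)-|S|=s-4=\frac{n-8}{3}$, hence a matching of size at most $\frac{n+4}{3}$ by Tutte--Berge. The routing you describe (two crossings per face, $au_3\times bu_2$ and $au_4\times bu_1$) is precisely the drawing the paper delegates to Fig.~\ref{fig:tight}(b), so the proof is correct and essentially identical.
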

\begin{proof}
Consider the graph in Fig.~\ref{fig:tight}(b), which has been built as follows.
Start with a planar graph $H$ on $s$ vertices, where $s\geq \max\{4,\frac{N+4}{3}\}$,
such that all faces of $H$ are simple cycles of length 4.
Into each face $\{u,v,w,x\}$ of $H$, insert two more vertices that are
all adjacent to all of $u,v,w,x$.  Obviously the resulting graph $G$ has
minimum degree 4 and the figure shows that $G$ is 1-planar.
Also, $H$ has $s-2$ faces, hence $G$ has $n=s+2(s-2)=3s-4\geq N$ vertices.
Setting $S$ to be the $s$ vertices of $H$, we observe that every vertex
in $G\setminus S$ becomes a singleton component.  So 
$$\odd(G\setminus S)-|S| = 2(s-2) - s = s-4 = \frac{n-8}{3}.$$
By Theorem~\ref{thm:TB} therefore any matching has size at most $\frac{n+4}{3}$.
\end{proof}

We note here that the same lower bound can be achieved with a much simpler construction
that combines $(n-2)/3$ copies of the complete graph $K_5$ at an edge
(see Fig.~\ref{fig:tight}(c)), but the connectivity of the resulting graph is not as high.

\begin{lemma}
\label{lem:mindeg5}
For any $N$,
there exists a simple 1-planar graph with minimum degree 5 and $n\geq N$ vertices for which any matching
has size at most $\frac{2n+3}{5}$.
\end{lemma}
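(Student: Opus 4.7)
The plan is to construct $G$ by gluing together several copies of a small 1-planar complete graph, in direct analogy with the remark following Lemma~\ref{lem:mindeg4} (which glued copies of $K_5$ along an edge). For minimum degree $5$ the natural building block is $K_6$, which is 1-planar since it has $15 \leq 4\cdot 6 - 8 = 16$ edges and such a drawing is classical, whereas $K_7$ has $21 > 4\cdot 7 - 8$ edges and so is not 1-planar. Given $N$, I will fix an integer $k \geq \max\{4,\lceil (N-1)/5\rceil\}$, take $k$ disjoint copies of $K_6$, and identify one distinguished vertex from each copy with a single common vertex $v$. The resulting graph $G$ has $n = 1 + 5k \geq N$ vertices.

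Next I will verify the required structural properties. Every vertex other than $v$ retains its $K_6$-degree of $5$, and $v$ itself has degree $5k \geq 20$, so the minimum degree of $G$ is exactly $5$. For 1-planarity, I will place $v$ at a single point of the plane and arrange the $k$ copies of $K_6$ inside $k$ pairwise disjoint angular wedges around $v$, each wedge containing a fixed 1-planar drawing of one copy of $K_6$; since edges drawn inside distinct wedges cannot cross one another, every edge of $G$ still has at most one crossing, so the combined drawing is 1-planar.

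Finally I will apply Theorem~\ref{thm:TB} with $S = \{v\}$. Deleting $v$ from any copy of $K_6$ leaves $K_5$, so $G \setminus S$ is a disjoint union of $k$ copies of $K_5$, each of which is a connected component of odd cardinality. Therefore $\odd(G\setminus S)-|S| = k-1$, and
\[
|M| \;\leq\; \tfrac{1}{2}\bigl(n-(k-1)\bigr) \;=\; \tfrac{1}{2}(4k+2) \;=\; 2k+1 \;=\; \tfrac{2n+3}{5},
\]
as required. The only step that needs genuine care is the 1-planarity verification, and even that reduces to (i) the known 1-planarity of $K_6$ and (ii) the observation that placing 1-planar subdrawings in angularly disjoint wedges around a shared vertex preserves the one-crossing-per-edge property. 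The arithmetic collapsing $2k+1$ to $(2n+3)/5$ is what dictates the choice of $K_6$ as the gadget and $v$ as the only vertex of $S$.
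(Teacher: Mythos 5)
Your construction is exactly the paper's: one hub vertex completed to a $K_6$ with each of $(n-1)/5$ groups of five vertices, $S$ equal to the hub, and the same Tutte--Berge computation giving the bound $2k+1=\frac{2n+3}{5}$. The only detail worth adding to your wedge argument is that the shared vertex must lie on the outer face of the chosen 1-planar drawing of $K_6$ (which the standard drawing of $K_6$ as the planar octahedron plus three singly-crossed diagonals provides), so the proposal is correct and essentially identical to the paper's proof.
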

\begin{proof}
Consider the graph in Fig.~\ref{fig:tight}(d), which has been built as follows.
Let $n\geq N$ be such that $n\equiv 1\bmod 5$.  Create one vertex $v_s$, and split the
remaining vertices into $(n-1)/5$ groups of five vertices each.  For each group
$\{v_1,\dots,v_5\}$, inserted edges to turn 
$\{s,v_1,\dots,v_5\}$ into a complete graph $K_6$.  
Obviously the resulting graph $G$ has
minimum degree 5, and the figure shows that $G$ is 1-planar.

Setting $S$ to be the single vertex $v_s$, we observe that each of the
$(n-1)/5$ groups become an odd component of $G\setminus S$. Hence
$$\odd(G\setminus S)-|S| = (n-1)/5 - 1 = \frac{n-6}{5}.$$
By Theorem~\ref{thm:TB} therefore any matching has size at most $\frac{2n+3}{5}$.
\end{proof}

\section{Lower bounds on the matching-size}

In this section, we prove Theorem~\ref{thm:main} by proving bounds
on $\odd(G\setminus S)-|S|$ for a 1-planar graph $G$ and an arbitrary
vertex-set $S$ under assumptions on the minimum degree.

We first briefly review the technique by Nishizeki and Baybars \cite{NB79}
to prove matchings bounds in a planar graph $G$ of minimum degree 3.
Let a vertex-set $S$ be given.
By Corollary~\ref{cor:TB} it suffices to show that $\odd(G\setminus S)-|S|\leq \frac{n+8}{3}$.  To this end,
delete any edge that
connects two vertices in $S$, and delete any component of $G\setminus S$
that has even cardinality;
note that both do not increase $\odd(G\setminus S)$ and can only decrease $n$.
Also delete any odd component of $G\setminus S$
that contains at least 3 vertices; this decreases
$\odd(G\setminus S)$ by one and $n$ by at least 3 and so does not make the
bound worse.  We end with a planar bipartite graph $G'$
where one side is $S$ and
the other side $T$ has one vertex for each singleton component of $G\setminus S$.
Furthermore, no edges incident to a vertex in $T$ was deleted, so $\deg(t)\geq 3$ for
all $t\in T$.  Since $G'$ has $n(G')=|S|+|T|$ vertices, it has at most 
$2(|S|{+}|T|)-4$ edges and at least $3|T|$
edges, so $|T|\leq 2|S|-4$.  Therefore $3\left(\odd(G'\setminus S)-|S|\right)=3|T|-3|S|\leq 2|T|-4|S|+n(G')\leq n(G')-8$
whence the matching-bound follows.

\subsection{Independent sets in 1-planar graphs}

The crucial ingredient for the proof by Nishizeki and Baybars is the
bound $|T|\leq 2|S|-4$ in a planar bipartite graph $(S\cup T,E)$
where all vertices in $T$ have minimum degree 3.  In this section, we
aim to show similar bounds for 1-planar graphs.  This requires
entirely different techniques than the simple edge-counting argument
that sufficed for planar graphs.  We phrase it as a slightly more
general statement about independent sets in 1-planar graphs.

\begin{lemma}
\label{lem:bipartite}
Let $G$ be simple 1-planar graph. 
Let $T$ be a non-empty independent set in $G$
where $\deg(t)\geq 3$ for all $t\in T$.
Let $T_d$ be the vertices in $T$ that have degree $d$. 
Then 
$$2|T_3|+\sum_{d\geq 4} (3d-6)|T_d| \leq 12|V\setminus T|-24.
\footnote{This bound is not tight, and in fact
$2|T_3|+\sum_{d\geq 4} (3d+3\lceil d/3 \rceil -12) |T_d| \leq 12|V\setminus T|-24$
could be shown with much the same proof.}$$
\end{lemma}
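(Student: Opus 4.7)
The plan is to work with an equivalent edge-count inequality. Since $T$ is independent, every edge incident to a $T$-vertex goes to $S$, so with $m=\sum_{t\in T}\deg(t)=3|T_3|+\sum_{d\geq 4}d\,|T_d|$ denoting the number of $T$-$S$ edges, a short calculation (using $|T|=|T_3|+\sum_{d\geq 4}|T_d|$) shows the claimed bound is equivalent to the edge bound
$$m\ \leq\ 2|T|+4|S|+\tfrac{1}{3}|T_3|-8. \qquad(*)$$
My first step is to consider the bipartite subgraph $G'$ of $G$ consisting only of the $T$-$S$ edges, with its inherited 1-planar bigon-free drawing; this has sides $T$ and $S$. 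Writing $m_-$ for the uncrossed edges and $p_2$ for the number of crossings in this drawing, so $m=m_-+2p_2$, Observation~\ref{obs:edgecount} applied to $G'$ gives $m_-+p_2\leq 2(|T|+|S|)-4$, and hence $m\leq 2(|T|+|S|)-4+p_2$. It would then suffice to prove the crossings bound $p_2\leq 2|S|+\tfrac{1}{3}|T_3|-4$.

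To prove this crossings bound, I plan a discharging argument on the planarization $\Gamma_P$ of $G'$, whose vertex set is $T\cup S\cup C$ with $|C|=p_2$ and whose faces all have length $\geq 3$. The key local structural fact I would exploit is that at each $p_2$-crossing with crossed edges $t_1s_1$ and $t_2s_2$, two of its four surrounding quadrants correspond to $\Gamma_P$-faces of length at least $4$: specifically the quadrants whose two real-vertex neighbors of the crossing are either both in $T$ or both in $S$, since $T$ is independent and the two $S$-endpoints of a single quadrant are not joined by a $G'$-edge. Consequently at most two of the four quadrants at each $p_2$-crossing can be triangles, and any triangular face has exactly one corner in $T$, one in $S$, and one crossing.

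Initial charges on vertices and faces of $\Gamma_P$ would be chosen so that, combining Euler's formula with $|V(\Gamma_P)|=|T|+|S|+p_2$ and $|E(\Gamma_P)|=m+2p_2$, the total charge collapses to exactly $6|S|+|T_3|-12=3(2|S|-4)+|T_3|$. A natural such assignment gives each $t\in T$ the charge $-6+[\deg(t){=}3]$, each $s\in S$ the charge $0$, each crossing the charge $-6$, and each face $f$ the charge $3|f|-6$; one checks via Euler that this indeed sums to $6|S|+|T_3|-12$. Discharging rules would then push the face-surpluses to crossings and $T$-vertices so that each crossing ends with charge at least $3$, each $T$-vertex with charge at least $0$, and each face with charge at least $0$. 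Summing would give $3p_2\leq 6|S|+|T_3|-12$, which is the desired crossings bound.

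The hard part will be designing the face-to-vertex/crossing discharging rules so they close simultaneously: each face of length $\ell$ has a surplus $3\ell-6$ to distribute among its $\ell$ corners (of which at most $\lfloor\ell/2\rfloor$ are crossings, since crossing corners are never adjacent along a face boundary), while each crossing needs $+9$ to reach target $3$ and each $T$-vertex needs $+5$ or $+6$ to reach $0$. I expect that the ``at most two triangles per $p_2$-crossing'' observation is exactly what allows crossings to pull their full $+9$ from adjacent faces, and that the one-unit boost to each $T_3$-vertex in the initial charge is what produces the $\tfrac13|T_3|$ slack in the final inequality -- reflecting that a degree-$3$ vertex of $T$ has only barely enough spare budget to compensate for the extra triangles it sits next to.
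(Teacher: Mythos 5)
Your reduction is correct as far as it goes: the claimed inequality is indeed equivalent to $m\leq 2|T|+4|S|+\frac13|T_3|-8$, Observation~\ref{obs:edgecount} applied to the $T$--$S$ subgraph does yield $m\leq 2(|T|+|S|)-4+p_2$, and your structural observations about the planarization (no two crossing-corners adjacent on a face boundary, every triangle has exactly one corner of each of the three types, at most two triangular quadrants per crossing) are all valid, as is the bookkeeping showing the initial charges sum to $6|S|+|T_3|-12$. But the proof stops exactly where the lemma becomes hard: the crossing bound $p_2\leq 2|S|+\frac13|T_3|-4$ is never established, and you explicitly defer the design of the discharging rules. This is not a routine omission. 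Your own accounting shows that total demand equals total supply exactly in the tight cases, so the rules must close with no slack at every face, and there are concrete local configurations where a face's surplus cannot cover the demands of its incident corners. For instance, take $t\in T_3$ with all three edges crossed at $c_1,c_2,c_3$; each of the three faces at $t$ is flanked by two of these crossings, and if such a face is a quadrilateral $c_i\,t\,c_j\,x$ it carries surplus $6$ while the average demands of its corners ($9/4$ per quadrant for each crossing plus $5/3$ for $t$) already exceed $6$. Charge must therefore be routed from non-incident faces, and whether that is always possible is precisely the content of the paper's Steps~1--2 and the second bullet of Step~4 (augmenting $\Gamma$ so that no $T$-vertex has three consecutive crossed edges, inserting the dummy vertices $S_\Delta$, and using simplicity to rule out $t'=t''$). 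Nothing in your sketch plays the role of those arguments, so the proposal has a genuine gap at its core. I would also caution that your intermediate crossing bound is strictly stronger than what the lemma requires in some regimes (it is sufficient but not implied by the lemma), so before investing in the discharging you should first convince yourself it is even true; your framework is a legitimate alternative route (planarization plus discharging, versus the paper's edge/vertex augmentation plus charging on edges), but at present it is a plan rather than a proof.
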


We will at the same time prove another result that is related
(but neither result implies the
other); this will be used in a future paper \cite{BiedlKlute20}.  
Define the {\em crossing-weighted degree} of a vertex $v\in V$ to be
the degree plus the number of incident uncrossed
edges.  Thus, uncrossed edges count doubly.    

\begin{lemma}
\label{lem:bipartiteWeighted}
Let $G$ be a simple graph with a 1-planar drawing $\Gamma$.
Let $T$ be a non-empty independent set in $G$
where $\deg(t)\geq 3$ for all $t\in T$.
Let $W_d$ be the vertices in $T$ that have crossing-weighted degree $d$.
Then $$2|W_3|+2|W_4|+\sum_{d\geq 5} (3d-12) |W_d| \leq 12|V\setminus T|-24.$$
\end{lemma}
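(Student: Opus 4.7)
I proceed in two stages: first reduce to the bipartite setting and derive a baseline from Observation~\ref{obs:edgecount}, then close the remaining gap via a discharging argument on the planarization.

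\textbf{Bipartite reduction and baseline.} Let $U := V \setminus T$ and let $G'$ be the bipartite subgraph of edges with exactly one endpoint in $T$, drawn as inherited from $\Gamma$. Any $\Gamma$-crossing between a $G'$-edge and a $U$-internal edge is absent in the restricted drawing, so the crossing-weighted degree $d(t)$ of each $t \in T$ can only increase when passing to $G'$; equivalently each $t$ moves to $W_{d'}$ with $d'\geq d$. Since the target coefficient $\phi(d) := \max\{2,\,3d-12\}$ is monotone non-decreasing in $d$, it suffices to prove the bound for $G'$. In this bipartite setting every edge has exactly one $T$-endpoint, so
\[
\sum_{t \in T} d(t) \;=\; m_\times + 2m_- \;\leq\; 4|T|+4|U|-8
\]
by Observation~\ref{obs:edgecount}. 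Rearranging gives $\sum_d (3d-12)|W_d| \leq 12|U|-24$, which matches the target coefficient $3d-12$ for $d\geq 5$ but falls short by $5|W_3|+2|W_4|$ for $d\in\{3,4\}$ (the target uses $+2$ whereas rearranging gives $-3$ and $0$).

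\textbf{Closing the gap.} The remaining task is to produce an extra $\tfrac{5}{6}|W_3|+\tfrac{1}{3}|W_4|$ of slack in the Observation~\ref{obs:edgecount} bound. I plan a discharging argument on the planarization $G'_P$, which is planar, bigon-free, and tripartite with classes $T$, $U$, and the set $D$ of degree-$4$ dummies. A preliminary structural claim --- proved by a short case check using 1-planarity (no two dummies in a row along a face boundary) and bipartiteness (no $T$-$T$ or $U$-$U$ edge) --- shows that every length-$3$ face of $G'_P$ is a triangle $(t,u,c)$ in which $tu$ is uncrossed and $c$ is a dummy at which two other edges cross. Consequently, $W_3$-vertices (which have no uncrossed edges) are incident only to faces of length at least $4$, and $W_4$-vertices with exactly one uncrossed edge lie on at most two length-$3$ faces. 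Starting from Euler charges $\deg_{G'_P}(v)-4$ at vertices and $\ell(f)-4$ at faces (total $-8$), push $\tfrac12$ unit along each uncrossed edge from each adjacent face to the $T$-endpoint, so that each $t\in T$ ends with charge $d(t)-4$; then a second redistribution keyed to the length-$3$ classification should extract the required $\tfrac56$ and $\tfrac13$ units per $W_3$- and $W_4$-vertex.

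\textbf{Main obstacle.} The hard part is designing the second redistribution. Each length-$3$ face retains charge $-\tfrac32$ after the first round and must ship this deficit elsewhere. Its natural absorbent is the $T$-vertex $t$ on its boundary, but if $t \in W_3 \cup W_4$ then the final charge of $t$ is already at most $0$, so the compensation must come from the $U$-vertex $u$ or from the face across the dummy $c$. Globally reconciling these transfers without overcharging any vertex --- and in particular verifying that exactly $\tfrac56|W_3|+\tfrac13|W_4|$ units of slack can be extracted consistently --- will most likely require a matching or orientation argument on the auxiliary bipartite graph pairing length-$3$ faces with their absorbing neighbours, and is where most of the technical effort lies.
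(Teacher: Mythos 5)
Your first stage is sound: restricting to the $T$--$(V\setminus T)$ edges can only raise each crossing-weighted degree, and Observation~\ref{obs:edgecount} then yields $\sum_d (3d-12)|W_d|\le 12|V\setminus T|-24$. But as you yourself note, this proves the lemma only with coefficients $-3$ and $0$ on $|W_3|$ and $|W_4|$ where $+2$ is required, and the entire content of the lemma lives in that gap: the whole point is that low-weighted-degree vertices of $T$ (degree-$3$ vertices all or almost all of whose edges are crossed) still force a positive amount of ``$S$-side'' structure. Your second stage is a plan, not a proof. The first redistribution (giving each $t$ charge $d(t)-4$) merely re-derives the baseline in discharging language; the required extra $\tfrac56|W_3|+\tfrac13|W_4|$ must come from the second redistribution, and you explicitly leave its design open (``will most likely require a matching or orientation argument \dots is where most of the technical effort lies''). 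Worse, the obvious local sources are dry: a $W_3$-vertex sits on three faces of length $\ge 4$ in the planarization, but a length-$4$ face carries charge $0$ and has nothing to give, so the slack must be routed globally from high-degree $U$-vertices or long faces, and it is not at all clear this can be done consistently. So the central step is missing, not merely unpolished.

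For comparison, the paper does not try to sharpen the edge count of Observation~\ref{obs:edgecount} at all. Instead it \emph{augments} the drawing: along every crossing of an edge $(t,s_i)$ it adds an uncrossed edge from $t$ to the far $S$-endpoint of the crossing edge (so that no $T$-vertex retains three consecutive crossed edges), and it adds auxiliary $S_\Delta$-vertices joined to three $T$-vertices inside any region meeting $T$ three times. Charges are then placed on edges ($6$ per uncrossed, $3$ per crossed, $2$ per $S_\Delta$-edge), and the geometric case analysis shows every $t\in T$ collects at least $14$ charges --- which is exactly the $14 = 3\cdot d + 2$ needed for $d=4$ and more than needed for $d=3$ --- while the total charge is still bounded by $12|V\setminus T|+12|T|-24$ via Observation~\ref{obs:edgecount}. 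That augmentation step is the idea your sketch would need to rediscover inside the discharging; without it (or a worked-out substitute) the proof is incomplete.
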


\begin{proof} (of both Lemma~\ref{lem:bipartite} and \ref{lem:bipartiteWeighted})
Fix a 1-planar drawing $\Gamma$ of $G$ if not given yet.
We use a charging scheme, where we assign some {\em charges}
(units of weight) to edges in $G$ (as well as some additions
that we make to $G$), redistribute those to the vertices in $T$,
and then count the number of charges in two ways to obtain the 
bound.

\medskip\noindent{\bf Step 0: Make $G$ bipartite.}  Let $S$ be the
vertices that are not in $T$, and note that $|S|\geq 3$ since $T$ is
non-empty and vertices in it have degree 3 or more.
Delete all edges within $S$ so that
$G$ becomes bipartite; this does not affect degrees in $T$, can only 
increase crossing-weighted degrees in $T$ by making some edges uncrossed,
and so it suffices to prove the bound for the resulting graph.

\medskip\noindent{\bf Step 1: Add more edges.}
Now add any edge to $\Gamma$ that can be added without crossing while
remaining bipartite.  This can only increase degrees of vertices
in $T$ and so improve the bound.  We are allowed to add multiple edges, as
long as they do not form a bigon, 
see edge $({\tt A},{\tt d})$ in Fig.~\ref{fig:example}.

We claim that in $\Gamma'$ no vertex $t\in T$ has three
consecutive crossed edges $e_1,e_2,e_3$ (in the cyclic order
of edges defined by $\Gamma'$).
Assume there were three such consecutive edges in $\Gamma$, 
and let $(t',s')$ (with $t'\in T$ and $s\in S$)
be the edge that crosses $e_2$, say at $c$.
(For an illustration, consider vertex $t={\tt d}$ in Fig.~\ref{fig:example},
which has three consecutive crossed edges to ${\tt E},{\tt A}$ and ${\tt C}$ in $\Gamma$; 
hence $(t',s')=({\tt c},{\tt B})$.)
We can add an uncrossed edge $e'=(s',t)$ by 
tracing along $c$;  this  edge would end
before or after $e_2$ in the clockwise order at $t$.    Adding $e'$
does not create a bigon, since on one side of $e'$ the region contains
$c$ and on the other side the region contains the
crossing in $e_1$ or $e_3$.
So we added this edge when creating $\Gamma'$ from $\Gamma$, and no
three consecutive crossed edges in $\Gamma$ remain consecutive in $\Gamma'$.


\begin{figure}[ht]
\hspace*{\fill}
\includegraphics[width=0.6\linewidth]{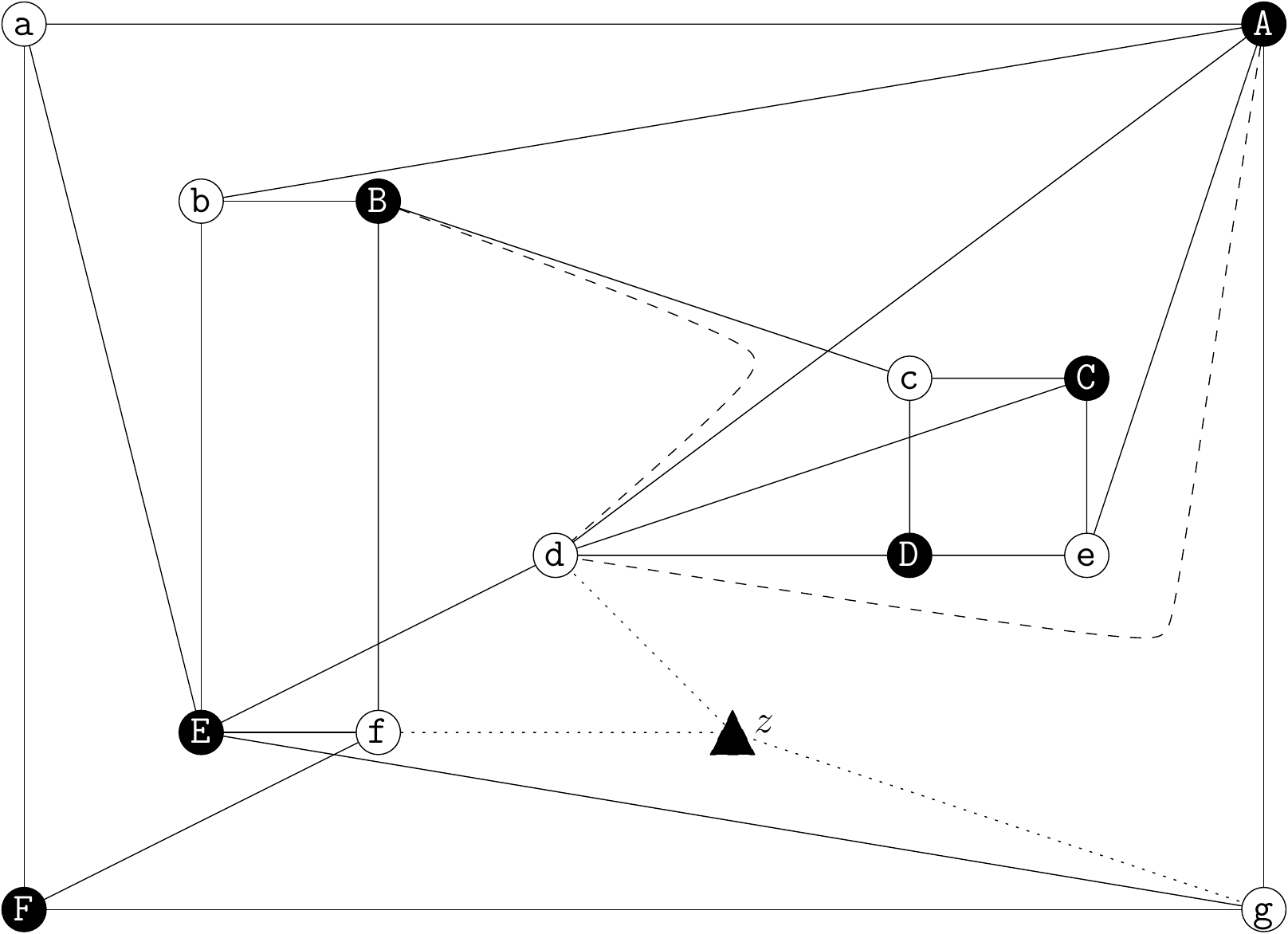}
\hspace*{\fill}
\caption{A drawing $\Gamma$ of a bipartite 1-planar graph with minimum degree 3;
vertices of $T$ are white.
Edges in $\Gamma'\setminus \Gamma$ are dashed, the unique vertex in $S_\Delta$ is
a triangle and edges in $E_\Delta$ are dotted.}
\label{fig:example}
\end{figure}

\medskip\noindent{\bf Step 2: Add vertices and edges.}
As a next step, we possibly augment $\Gamma'$ further with vertices $S_\Delta$.
Assume there exists
a region $R$ in $\Gamma'$ incident to at least three vertices in $T$.
Add a new vertex $z$ to $S_\Delta$ and connect it to exactly three
vertices of $T$ on $R$ via uncrossed edges within $R$.  
This splits $R$ into three regions, each with
fewer vertices of $T$, so we can repeat until no such regions remain.
See the new vertex incident to ${\tt d,f,g}$ in
Fig.~\ref{fig:example}.  Let $E_\Delta$ be the added
edges; we have $|E_\Delta|=3|S_\Delta|$.  Also observe that edges in
$E_\Delta$ are uncrossed and that the resulting
drawing $\Gamma_\Delta$ is again bipartite.

\medskip\noindent{\bf Step 3: Assigning charges.}
We assign charges as follows:  Let $E_-$ be the uncrossed edges of $\Gamma'$;
each of those receives 6 charges.  Let $E_\times$ be the crossed edges
of $\Gamma'$; each of those receives 3 charges.  Finally the (uncrossed) edges $E_\Delta$ 
of $\Gamma_\Delta-\Gamma'$ receive 2 charges each.
Using Observation~\ref{obs:edgecount} hence
\begin{eqnarray}
\label{eq:1}
\nonumber \#\mbox{charges} & = & 6|E_-|+3|E_\times| + 2|E_\Delta|
= 6(|E_-|+|E_\Delta|)+3|E_\times| - 4|E_\Delta| \\
\nonumber & \leq & 12|S|+12|S_\Delta|+12|T| -  24 - 12|S_\Delta| \\ 
& = & 12|S|+12|T|-24.
\end{eqnarray}

\medskip\noindent{\bf Step 4: Charges at a vertex.}
For $t\in T$, let $c(t)$ be the total charges of incident edges of $t$.
We lower-bound $c(t)$ as follows:
\begin{itemize}
\item Assume first that $t$ has at least two incident uncrossed edges 
in $\Gamma'$.
It obtains 12 charges from these two edges, and
at least $3(\deg(t)-2)$ further charges from the remaining edges
that it had in $G$.
Hence $c(t)\geq 12+3(\deg(t)-2) =3\deg(t)+6 \geq 15$.

\item Now assume that $t$ has at most one uncrossed incident edge in $\Gamma'$.
We aim to show that $c(t)\geq 14$.
By $\deg(t)\geq 3$, and since no three crossed edges are consecutive, vertex $t$ has at least one 
uncrossed edge, so it has exactly one, call it $e_1$.  This implies that $t$ has only three
incident edges in $\Gamma'$, else the edges other than $e_1$ would
contain three consecutive crossed edges.  So this case can occur only 
if $t\in T_3$ and it has exactly one uncrossed edge $e_1=:(t,s_1)$ and two
crossed edges $(t,s_2)$ and $(t,s_3)$.
See vertex $t={\tt f}$ in Fig.~\ref{fig:example}, where $\{s_2,s_3\}=\{{\tt B},{\tt F}\}$.

Let $(s',t')$ be the edge that crosses $(t,s_2)$ in $\Gamma$, with $s'\in S$
and $t'\in T$.  Then (as above) an edge $(s',t)$ could be drawn without crossing
and could have been added to $\Gamma'$. Since the only uncrossed edge at $t$
is $e_1$ therefore $s'=s_1$.
Similarly one argues that $(t,s_3)$ is crossed by edge $(s_1,t'')$
for some $t''\in T$.  (In Fig.~\ref{fig:example} we have $\{t',t''\}=\{{\tt d,g}\}$.)
If we had $t'=t''$ then there would be two copies of edge $(s_1,t')$, and
both would be crossed.  Since $G$ is simple and no crossed edges were added
for $\Gamma'$, this is impossible and $t'\neq t''$.      

Observe that hence $t,t',t''$ all belong to the same region $R$ between the
two crossings in $(t,s_2)$ and $(t,s_3)$.  Therefore we added a vertex $z$ of $S_\Delta$ inside 
$R$ and made it adjacent to $t$.  Edge $(t,z)$ has two charges, and in
total we have $c(t)\geq 6+3+3+2=14$. 
\item (The following is relevant only for Lemma~\ref{lem:bipartiteWeighted}.)
	Assume $t\in W^d$.  Then $t$ receives 6 charges for every uncrossed
	edge that was incident in $\Gamma$, and 3 charges for every crossed 
	edge that was incident in $\Gamma$ (and possibly some more from
	edges added in later steps).  Since uncrossed edges count twice
	for the crossing-weighted degree, hence $c(v)\geq 3d$.
\end{itemize}

To prove Lemma~\ref{lem:bipartite}, use
$c(t)\geq 14$ for all $t$, $c(t)\geq 18$ for $t\in T_4$
and $c(t)\geq 21$ for $t\in T_d$ with $d\geq 5$ since a vertex in $T_d$ has degree at least $d$ in $\Gamma'$.  Therefore
\begin{eqnarray}
\label{eq:2}
\#\mbox{charges} & = & \sum_{t\in T} c(t) \geq 14|T_3| + 18|T_4| + 21 \sum_{d\geq 5} |T_d|.
\end{eqnarray}
To prove Lemma~\ref{lem:bipartiteWeighted}, observe that again
$c(t)\geq 14$ for all $t$ and $c(t)\geq 3d$ for $t\in W_d$.  Therefore
\begin{eqnarray}
\label{eq:3}
\#\mbox{charges} & = & \sum_{t\in T} c(t) \geq 14|W_3| + 14|W_4| + \sum_{d\geq 5} 3d\; |W_d|.
\end{eqnarray}
Combining this with (\ref{eq:1}) and subtracting $12|T|=12\sum_{d\geq 3} |T_d|
=12\sum_{d\geq 3} |W_d|$
from both sides gives the results.
\end{proof}

\subsection{Matching-bounds}

Now we use Lemma~\ref{lem:bipartite} to obtain the desired matching-bounds.
For minimum degree 3 and 4 we proceed almost exactly as done by Nishizeki and
Baybars \cite{NB79}: preprocess the graph to remove some edges and components that can do
no harm, and then use the upper bound on the resulting independent set in a 1-planar graph.

\begin{lemma}
\label{lem:mindeg34}
Let $G$ be a simple 1-planar graph with minimum degree $\delta\geq 3$.
Then for any vertex set $S$ with $|S|\geq 2$, we have 
\begin{itemize}
\item $\odd(G\setminus S)-|S|\leq \frac{5}{7}n-\frac{24}{7}$ if $\delta\geq 3$, and
\item $\odd(G\setminus S)-|S|\leq \frac{1}{3}n-\frac{8}{3}$ if $\delta\geq 4$.
\end{itemize}
\end{lemma}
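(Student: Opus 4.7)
The plan is to follow the Nishizeki--Baybars preprocessing template outlined earlier in the paper, and then invoke Lemma~\ref{lem:bipartite} on the resulting bipartite graph. First, I would preprocess $G$ as follows: delete every edge with both endpoints in $S$, delete every even component of $G\setminus S$, and delete every odd component of $G\setminus S$ of size at least $3$. I would then verify that if the claimed inequality holds for the preprocessed graph $G'$ with its vertex count $n'$, it also holds for $G$ with $n$ vertices: the first two operations leave $\odd(G\setminus S)-|S|$ unchanged while possibly decreasing $n$, and the third operation decreases $\odd(G\setminus S)$ by $1$ per component while decreasing $n$ by at least $3$, so the slacks $\frac{5}{7}\cdot 3=\frac{15}{7}\geq 1$ and $\frac{1}{3}\cdot 3=1$ comfortably absorb the change in both regimes.

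After preprocessing, $G'$ is bipartite with sides $S$ and $T$, where $T$ is the set of (remaining) singleton components of $G\setminus S$. Since no edge incident to $T$ was deleted, each $t\in T$ still satisfies $\deg_{G'}(t)=\deg_G(t)\geq\delta\geq 3$. Also $T$ is independent, $\odd(G'\setminus S)=|T|$, and $n'=|S|+|T|$. I would dispose of the degenerate case $T=\emptyset$ at once: then the desired inequality reduces to $-|S|\leq \tfrac{5}{7}|S|-\tfrac{24}{7}$ (resp.\ $-|S|\leq\tfrac{1}{3}|S|-\tfrac{8}{3}$), which holds because $|S|\geq 2$.

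For $T\neq\emptyset$, I apply Lemma~\ref{lem:bipartite} with $V\setminus T=S$, obtaining
\[
2|T_3|+\sum_{d\geq 4}(3d-6)\,|T_d|\;\leq\;12|S|-24.
\]
If $\delta\geq 3$, every coefficient on the left is at least $2$, so $2|T|\leq 12|S|-24$, i.e.\ $|T|\leq 6|S|-12$. A short algebraic rearrangement shows this is precisely equivalent to $|T|-|S|\leq \tfrac{5}{7}(|S|+|T|)-\tfrac{24}{7}$, which is the first claim. If $\delta\geq 4$ then $T_3=\emptyset$ and every coefficient on the left is at least $6$, giving $6|T|\leq 12|S|-24$, i.e.\ $|T|\leq 2|S|-4$, which rearranges to $|T|-|S|\leq\tfrac{1}{3}(|S|+|T|)-\tfrac{8}{3}$, the second claim.

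The reasoning is essentially a bookkeeping argument, so I do not expect serious mathematical obstacles; the genuine work is really sitting in Lemma~\ref{lem:bipartite}. The only delicate point I would make sure to get right is the transfer of the bound from $G'$ back to $G$ after preprocessing, since one has to check that the loss of one odd component in $\odd(G\setminus S)$ is dominated by the gain in the linear-in-$n$ right-hand side; this is the only place where the constants $\tfrac{5}{7}$ and $\tfrac{1}{3}$ are tested against the size-$3$ threshold used when discarding large odd components, and both just barely work out.
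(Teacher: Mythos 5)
Your proposal is correct and follows essentially the same route as the paper's proof: the same Nishizeki--Baybars preprocessing (discarding even components and odd components of size at least $3$, with the loss absorbed because $3c_\delta\geq 1$), the same degenerate case $T=\emptyset$ using $|S|\geq 2$, and the same application of Lemma~\ref{lem:bipartite} with the identical rearrangements $|T|\leq 6|S|-12$ and $|T|\leq 2|S|-4$.
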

\begin{proof}
Set $c_3=\frac{5}{7}$, $d_3=\frac{24}{7}$, $c_4=\frac{1}{3}$ and $d_4=\frac{8}{3}$;
the goal is to show $\odd(G\setminus S)-|S|\leq c_\delta n - d_\delta$.

We first preprocess $G$ by removing any component of $G\setminus S$
that has even size.  This does not affect $|S|$ 
or $\odd(G\setminus S)$, or degrees of vertices in $V\setminus S$ that remain,
and it can only decrease $n$.  So it suffices to prove the
bound for the remaining graph.

Next remove all odd components of
$G\setminus S$ that have three or more vertices,
and let $G'$ be the resulting graph.
If $k$ components are removed, then hence 
$n(G')\leq n-3k$.  We will show below that $\odd(G'\setminus S)\leq c_\delta n(G') - d_\delta$,
and therefore (by $c_\delta \geq \frac{1}{3}$)
$$ \odd(G\setminus S) = \odd(G'\setminus S)+ k \leq c_\delta n(G')- d_\delta + c_\delta 3k
\leq c_\delta n - d_\delta.$$
It remains to show the claim for $G'$.  Let $T=V(G')\setminus S$, and notice
that these are exactly the singleton components of $G\setminus S$ since all other components were removed.
In particular they form an independent set in $G'$.
Let $T_d$ be the vertices in $T$ that have degree $d$ in $G'$.

If $|T|=0$  then $G'\setminus S$ is empty, so $n(G')=|S|\geq 2$ and
$$\odd(G'\setminus S)-|S| = -|S|\leq -2 = 2 c_\delta - d_\delta \leq c_\delta n - d_\delta$$
as desired.  If $T$ is non-empty, then
apply Lemma~\ref{lem:bipartite} to $G'$, and also observe that $n(G')=|S|+|T|$.
If $\delta=3$, then $2|T| = \sum_{d\geq 3} 2|T_d|\leq 2|T_3|+6|T_4|+9\sum_{d\geq 5} |T_d| \leq 12|S|-24$ by Lemma~\ref{lem:bipartite}. Therefore
$$7\odd(G'\setminus S)-7|S| = 7|T|-7|S| = 2|T|-12|S|+5n(G') \leq 5n(G')-24.$$
If $\delta=4$, then $T_3$ is empty and $6|T| = \sum_{d\geq 4} 6|T_d|\leq 12|S|-24$ by Lemma~\ref{lem:bipartite}. Therefore
$$9\odd(G'\setminus S)-9|S| = 9|T|-9|S| = 6|T|-12|S|+3n(G') \leq 3n(G')-24.$$
The desired bound follows by dividing suitably.
\end{proof}	

With this we can obtain the first two matching bounds.

\begin{proof} (of Theorem~\ref{thm:main}(a) and (b))
Let $G$ be a 1-planar graph with minimum degree $\delta\in \{3,4\}$.
Fix an arbitrary vertex set $S$.  If $|S|\geq 2$ then 
Lemma~\ref{lem:mindeg34} gives $\odd(G\setminus S)-|S|\leq \frac{5}{7}n-\frac{24}{7}$
and $\odd(G\setminus S)-|S|\leq \frac{1}{3}n-\frac{8}{3}$, respectively,
and Corollary~\ref{cor:TB} gives the result. So we only have to bound
$\odd(G\setminus S)-|S|$ for small $S$.  
Let $X$ be the smallest odd integer with $X\geq \delta+1-|S|$, and
note that any odd component of $G\setminus S$ must have at least $X$ vertices
by simplicity and the minimum degree requirement.   
So $\odd(G\setminus S)\leq \frac{n-|S|}{X}$.  Now distinguish
cases.
\begin{itemize}
\item If $|S|=0$ then $X=5$ and $\odd(G\setminus S)-|S| \leq \frac{n}{5}$.
This is 
at most $\frac{5}{7}n-\frac{24}{7}$ for $n\geq 7$ and
at most $\frac{1}{3}n-\frac{8}{3}$ for $n\geq 20$. 
\item If $|S|=1$ and $\delta=3$ then $X=3$ and $\odd(G\setminus S)-|S| \leq \frac{n-1}{3}-1 \leq \frac{5}{7}n-\frac{24}{7}$ by $n\geq 6$.
\item If $|S|=1$ and $\delta=4$ then $X=5$ and $\odd(G\setminus S)-|S| \leq \frac{n-1}{5}-1 \leq \frac{1}{3}n-\frac{8}{3}$ by $n\geq 11$.
\end{itemize}
\end{proof}

For graphs with minimum degree 5 we have to keep more odd components of $G$.

\begin{lemma}
\label{lem:oddmindeg5}
Let $G$ be a 1-planar simple graph with minimum degree 5.
Then for any vertex set $S$ with $|S|\geq 1$, we have 
$\odd(G\setminus S)-|S|\leq \frac{1}{5}n-\frac{6}{5}$.
\end{lemma}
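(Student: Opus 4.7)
The plan is to mirror the framework of Lemma~\ref{lem:mindeg34}: preprocess $G$ to a simpler form, apply Lemma~\ref{lem:bipartite} to bound an independent set, and manipulate the resulting inequality. The new wrinkle for $\delta=5$ is that size-$3$ odd components of $G\setminus S$ cannot simply be removed (doing so would weaken the inequality), and must be controlled directly by Lemma~\ref{lem:bipartite}.

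First I would preprocess $G$: delete every even component of $G\setminus S$ (preserves $\odd$ and $|S|$, decreases $n$, strengthens the target) and every odd component of $G\setminus S$ of size $\geq 5$. A direct check shows that removing an odd component of size $|C|\geq 5$ decreases the LHS of $\odd(G\setminus S)-|S|\leq (n-6)/5$ by $1$ and the RHS by $|C|/5\geq 1$, so if the reduced graph satisfies the bound then so does the original. After preprocessing, $G\setminus S$ consists of $s_1$ singletons and $s_3$ components of size $3$, and with $n=|S|+s_1+3s_3$ the target reduces algebraically to $2s_1+s_3\leq 3(|S|-1)$.

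Next I would apply Lemma~\ref{lem:bipartite} to the bipartite subgraph $H$ of $G$ on vertex set $S\cup T$ (where $T:=V(G)\setminus S$), keeping only those edges of $G$ with one endpoint in $S$ and one in $T$. Then $H$ is simple 1-planar as a subgraph of $G$, $T$ is independent in $H$ by construction, and every $t\in T$ has $\deg_H(t)\geq 3$: a singleton has $\deg_H(t)=\deg_G(t)\geq 5$, while a vertex in a size-$3$ component has $\deg_H(t)\geq \deg_G(t)-2\geq 3$ (it loses at most two within-component edges). Lemma~\ref{lem:bipartite} then gives $2|T_3|+\sum_{d\geq 4}(3d-6)|T_d|\leq 12|S|-24$; bounding this below by $2\cdot 3s_3 + 9\cdot s_1$ (since each of the $3s_3$ triangle vertices contributes $\geq 2$ and each singleton, being in $T_{\geq 5}$, contributes $\geq 9$) yields $6s_3+9s_1\leq 12|S|-24$, i.e., $2s_3+3s_1\leq 4|S|-8$. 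This in turn implies $s_1\leq (4|S|-8)/3$, so
\[ 2s_3+4s_1 \;=\; (2s_3+3s_1)+s_1 \;\leq\; (4|S|-8) + \tfrac{4|S|-8}{3} \;=\; \tfrac{16|S|-32}{3} \;\leq\; 6|S|-6, \]
which is exactly the target $2s_1+s_3\leq 3(|S|-1)$. The edge case $T=\emptyset$ (forced whenever $|S|\leq 2$, since a singleton of $G\setminus S$ requires $|S|\geq 5$ and a size-$3$ component requires $|S|\geq 3$) is immediate: there $\odd=0$, so $\odd-|S|=-|S|\leq (n-6)/5$ holds for all $|S|\geq 1$ and $n\geq 1$.

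The hard part is the choice of subgraph $H$. More natural-looking reductions---contracting each size-$3$ component to a single vertex, or deleting two of its three vertices before applying Lemma~\ref{lem:bipartite}---leave only one representative per triangle in the independent set, contributing just $2$ to the Lemma's sum and yielding the too-weak bound $s_3\leq 6|S|-12$. By instead dropping only the within-component edges, all three vertices of each size-$3$ component survive in an independent set of the bipartite $H$, tripling the weight per triangle in the Lemma's charging inequality and giving the sharp $s_3\leq 2|S|-4$ that makes the final arithmetic close.
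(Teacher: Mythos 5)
Your proposal is correct and follows essentially the same route as the paper: the same preprocessing (discard even components and odd components of size $\geq 5$), the same key step of deleting only the within-component edges of the size-3 components so that all their vertices survive in the independent set with degree $\geq 3$ (and singletons with degree $\geq 5$), and the same application of Lemma~\ref{lem:bipartite} yielding $9s_1+6s_3\leq 12|S|-24$. The only differences are cosmetic: the paper keeps the edges inside $S$ (which Lemma~\ref{lem:bipartite} tolerates) and closes the arithmetic with a single linear combination rather than your two-step estimate via $s_1\leq(4|S|-8)/3$.
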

\begin{proof}
As in the proof of Lemma~\ref{lem:mindeg34} we can remove
components of $G\setminus S$ that have even size without affecting the bound.
Let $G'$ be the graph obtained from $G$ by removing all odd components of
$G\setminus S$ that have five or more vertices.  If $k$ components are
removed, then hence 
$n(G')\leq n-5k$.  We will show below that $\odd(G'\setminus S)\leq \frac{1}{5} n(G') - \frac{6}{5}$,
and therefore 
$$ \odd(G\setminus S) = \odd(G'\setminus S)+ k \leq \frac{1}{5} n(G')- \frac{6}{5} + \frac{1}{5} 5k
\leq \frac{1}{5} n - \frac{6}{5}.$$

It remains to show the claim for $G'$.  If $\odd(G'\setminus S)=0$, then by $n(G')=|S|\geq 1$
and $\odd(G'\setminus S)-|S|\leq -1 = \frac{1}{5}-\frac{6}{5}\leq \frac{1}{5}n(G')-\frac{6}{5}$ and we are done.
So assume $\odd(G'\setminus S)>0$.
In contrast to the proof of Lemma~\ref{lem:mindeg34}, $G'\setminus S$ is not necessarily an independent set, because components of size 3 
may have edges within them.  In contrast to the approach taken by 
Nishizeki and Baybars \cite{NB79}, we cannot contract such components
into a vertex, because 1-planarity may not be preserved under
contraction.  So we need a different approach.

For $i=1,3$, let $\calC_i$ be the components of $G'\setminus S$ that have size $i$.
Use $V(\calC_i)$ for the vertices of components in $\calC_i$, hence $|V(\calC_1)|=|\calC_1|$
while $|V(\calC_3)|=3|\calC_3|$.
Let $H$ be the 1-planar graph obtained by deleting all edges
within $V(\calC_3)$; this makes $V(\calC_3)\cup V(\calC_1)$ an independent set in $H$.
Any vertex $v \in V(\calC_3)$
has at least five neighbours in $G$ and at most two neighbours in its odd
component, so $\deg_H(v)\geq 3$.  All vertices in $V(\calC_1)$
have degree at least 5 in $H$.
Let $T_d$ for $d\geq 3$ be the vertices of
degree $d$ in $V(\calC_3)\cup V(\calC_1)$, then any vertex in $T_3\cup T_4$
must belong to $V(\calC_3)$.
Applying Lemma~\ref{lem:bipartite} to $H$, therefore
$$12|S|-24 \geq 2|T_3|+6|T_4| + 9 \sum_{d\geq 5} |T_d|
\geq 2|V(\calC_3)| + 9 |V(\calC_1)|.$$
Since $n(G')=|V(\calC_1)|+|V(\calC_3)|+|S|$, this implies
\begin{eqnarray*}
21\left(\odd(G'\setminus S) - |S|\right) & = & 21|\calC_1| + 21 |\calC_3| - 21|S| \\
& = & 21 |V(\calC_1)| + 7 |V(\calC_3)| - 21 |S|  \\
& = & 18 |V(\calC_1)| + 4 |V(\calC_3)| - 24 |S|  + 3n(G') \\
& \leq & (24|S|-48) - 24|S|+3n(G') = 3n(G')-48 
\end{eqnarray*}
which gives $\odd(G'\setminus S)-|S| \leq \frac{1}{7}n(G')-\frac{16}{7} \leq \frac{1}{5}n(G')-\frac{6}{5}$.
\end{proof}


\begin{proof} (of Theorem~\ref{thm:main}(c))
Let $G$ be a 1-planar graph with minimum degree 5.
Fix an arbitrary vertex set $S$.  If $|S|\geq 1$ then the
previous lemma gives $\odd(G\setminus S)-|S|\leq \frac{1}{5}n-\frac{6}{5}$.
If $|S|=0$, then every odd component of $G\setminus S=G$ must contain
at least 7 vertices since the minimum degree is 5; 
hence $\odd(G\setminus S)-|S| \leq \frac{n}{7} \leq \frac{1}{5}n-\frac{6}{5}$ by $n\geq 21$.
Either way the bound follows from Corollary~\ref{cor:TB}.
\end{proof}

\section{Other classes of 1-planar graphs}

In this section, we construct some other 1-planar graphs that
do not have large matchings, and offer some conjectures.

\subsection{Non-simple 1-planar graphs}

Our matching bounds were proved for simple 1-planar graphs.  Obviously
no non-trivial matching bounds can exist if we permit bigons, because 
then $K_{2,n}$ (with edges repeated as needed to achieve any desired minimum
degree) has no matching with more than 2 edges.

In fact, even excluding bigons is not enough to ensure a matching of
linear size.  The 1-planar drawing in Fig.~\ref{fig:conjecture}(a) has no 
bigon and the graph has minimum degree 3, yet it has no matching of size
exceeding 2 since removing the two black vertices leaves behind $n-2$
singleton components.  

The story is different for minimum degree $\delta\geq 4$ or if 
no two copies of an edge are both crossed.  Inspecting the
proof of Lemma~\ref{lem:bipartite}, we see that simplicity is
used in two places: we need that no bigons exist to use
Observation~\ref{obs:edgecount}, and we need to exclude the
possibility that $t'=t''$ in the second case of Step 4, where we
bound $c(t)\geq 14$ if $t$ has only one incident uncrossed edge.  This case is
never needed if $\deg(t)\geq 4$, and only uses that no two crossed edges
$(s_1,t')$ exist otherwise.  So Theorem~\ref{thm:main} holds 
for any 1-planar graph with a bigon-free drawing and additionally
the minimum degree is at least 4 and/or there are no multiple crossed 
copies of an edge.

\subsection{1-planar graphs of higher minimum degree}

For planar graphs, matching-bounds are interesting only for $\delta=3,4,5$, because for
$\delta=2$ there are no linear bounds (consider $K_{2,n}$), 
and for $\delta\geq 6$ there exists no planar bigon-free graph of minimum degree 
$\delta$.  In contrast to this, there are simple 1-planar graphs of minimum degree 6 or 7,
while no simple 1-planar graph can have minimum degree $\delta\geq 8$ since it has at most $4n-8$ edges
\cite{Schumacher}.  Naturally
one wonders what matching bounds can be obtained for $\delta=6,7$.

\begin{lemma}
For any $N$,
there exists a simple 1-planar graph with minimum degree 6 and $n\geq N$ vertices for which any matching
has size at most $\frac{3}{7}n+\frac{4}{7}$.
\end{lemma}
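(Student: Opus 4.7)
The plan is to mimic the construction of Lemma~\ref{lem:mindeg5} by replacing $K_6$ with a 1-planar 8-vertex graph of minimum degree $6$. The natural candidate is $K_{2,2,2,2}$ (equivalently $K_8$ minus a perfect matching), which is a well-known optimal 1-planar graph attaining the $4n-8$ edge bound; see e.g.~\cite{KLM17}. This is the smallest possible choice, since $K_7$ is not 1-planar.

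Given $N$, I would choose $n$ with $n\equiv 1\pmod 7$ and $n\geq\max\{8,N\}$, set $k=(n-1)/7$, and proceed as in Lemma~\ref{lem:mindeg5}: create one vertex $v_s$, split the remaining $n-1$ vertices into $k$ groups of seven, and for each group add edges so that the group together with $v_s$ forms a copy of $K_{2,2,2,2}$, with $v_s$ placed in one of the four parts of size $2$. To draw $G$ 1-planarly, I would start from a 1-planar drawing of $K_{2,2,2,2}$ in which $v_s$ lies on the outer face---any vertex can be moved to the outer face by choosing an incident region of the planarization as the outer face---and rescale it so that the six edges incident to $v_s$ leave within a narrow angular wedge. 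Then the $k$ rotated copies can be placed in disjoint angular sectors around $v_s$, yielding a 1-planar drawing of $G$.

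The minimum degree is exactly $6$: every non-shared vertex has degree $6$ inside its copy and no edges elsewhere, while $v_s$ has degree $6k\geq 6$. Setting $S=\{v_s\}$, the graph $G\setminus S$ consists of $k$ components, each isomorphic to $K_{2,2,2,2}$ minus one vertex and hence a connected graph on $7$ (odd) vertices. Therefore $\odd(G\setminus S)-|S|=k-1=(n-8)/7$, and Theorem~\ref{thm:TB} yields the matching-size upper bound $\tfrac{1}{2}\bigl(n-(n-8)/7\bigr)=\tfrac{3n+4}{7}$, as required.

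The only real obstacle is the 1-planarity bookkeeping---specifically, exhibiting a 1-planar drawing of $K_{2,2,2,2}$ with $v_s$ on the outer face and its incident edges confined to an arbitrarily narrow wedge. This can be extracted from any standard optimal 1-planar drawing of $K_{2,2,2,2}$ via a routine redrawing: pick a region of its planarization having $v_s$ on the boundary, make that region the outer face, and contract the drawing toward $v_s$. Once this template is in hand, the arithmetic with $n=7k+1$ in Tutte-Berge is immediate and the lemma follows.
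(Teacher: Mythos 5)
Your proposal is correct and is essentially the paper's own construction: the paper's gadget, ``a cube plus a crossing within each face,'' is exactly $K_8$ minus the four antipodal pairs, i.e.\ $K_{2,2,2,2}$, and the rest of the argument (one shared vertex $v_s$, $(n-1)/7$ copies, $S=\{v_s\}$, Tutte--Berge) is identical. The only cosmetic difference is that you describe the optimal 1-planar drawing abstractly while the paper obtains it directly from the planar cube with crossed face diagonals.
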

\begin{proof}
Consider the graph in Fig.~\ref{fig:conjecture}(b), which has been built as follows.
Let $n\geq N$ be such that $n\equiv 1\bmod 7$.  Create one vertex $v_s$, and split the
remaining vertices into $(n-1)/7$ groups of seven vertices each.  For each group
$\{v_1,\dots,v_7\}$, inserted edges to turn 
$\{v_s,v_1,\dots,v_7\}$ into a cube plus a crossing within each face of the cube.
Obviously the resulting graph $G$ has
minimum degree 6, and the figure shows that $G$ is 1-planar.

Setting $S$ to be the single vertex $v_s$, we observe that each of the
$(n-1)/7$ groups become an odd component of $G\setminus S$. Hence
$$\odd(G\setminus S)-|S| = (n-1)/7 - 1 = \frac{n-8}{7}.$$
Therefore any matching has size at most $\frac{3n+4}{7}$.
\end{proof}

\begin{figure}[ht]
\hspace*{\fill}
\subfigure[~]{\includegraphics[scale=0.8,page=7]{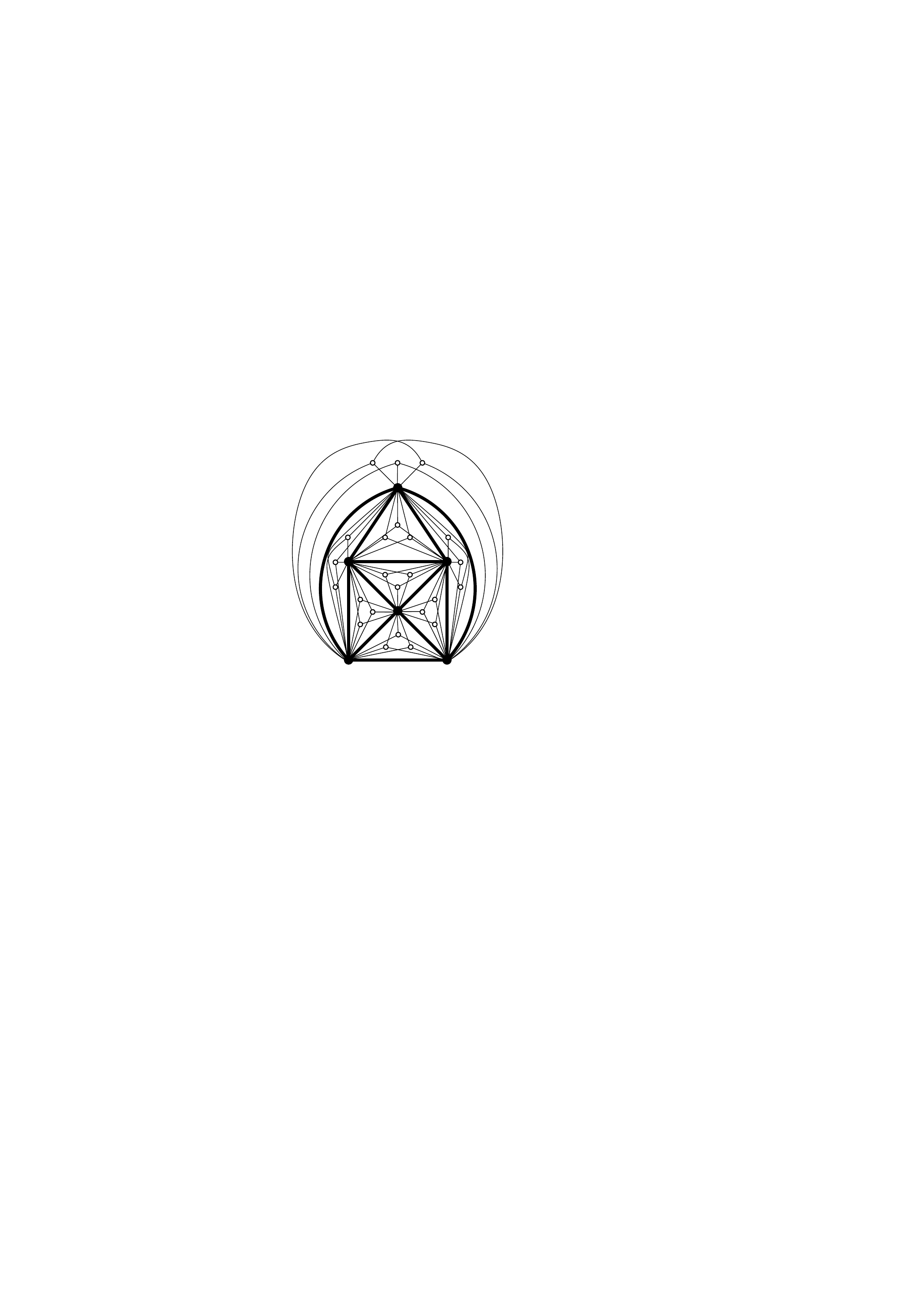}}
\hspace*{\fill}
\subfigure[~]{\includegraphics[scale=0.8,page=5]{conjecture.pdf}}
\hspace*{\fill}
\subfigure[~]{\includegraphics[scale=0.8,page=2]{conjecture.pdf}}
\hspace*{\fill}
\caption{Other 1-planar graphs without large matchings.  (a) 
A bigon-free 1-planar graph with minimum degree 3 and maximum
matching size 2.  (b)
A 1-planar graph with minimum degree 6. 
(c) A 1-planar graph with minimum degree 7.  }
\label{fig:conjecture}
\label{fig:mindeg7}
\end{figure}

We suspect that this is tight.

\begin{conjecture}
Any 1-planar graph with minimum degree 6 and $n\geq N$ vertices has
a matching of size at least $\frac{3}{7}n+O(1)$.
\end{conjecture}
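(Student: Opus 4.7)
The plan is to mimic the structure of Theorem~\ref{thm:main}. By Corollary~\ref{cor:TB} it suffices to show that $\odd(G\setminus S)-|S|\le \frac{n}{7}+O(1)$ for every $S\subseteq V(G)$.

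First I would dispose of the cases $|S|\le 2$. Any odd component $C$ of $G\setminus S$ contains a vertex whose $\ge 6$ neighbours lie in $C\cup S$, so $|C|-1+|S|\ge 6$ and hence $|C|\ge 7-|S|$. The only remaining small candidate is $|C|=5$ with $|S|=2$, in which case the minimum-degree condition forces $C=K_5$ with every vertex adjacent to both elements of $S$; then $C\cup S$ contains $K_7$ minus the potential edge inside $S$, which has $20=4\cdot 7-8$ edges on $7$ vertices and is not 1-planar, since no optimal 1-planar graph exists on $7$ vertices. Hence every odd component of $G\setminus S$ has size $\ge 7$, so $\odd(G\setminus S)-|S|\le (n-|S|)/7-|S|\le n/7$ and the bound follows.

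For $|S|\ge 3$ I would preprocess $G$ by removing even components of $G\setminus S$ (no harm), odd components of size $\ge 9$ (losing at rate $1/9\le 1/7$), and odd components of size exactly $7$ (at rate $1/7$, matching the extremal example of Fig.~\ref{fig:conjecture}(b)). The resulting graph $G'$ has odd components of size $1,3,5$ only, and a short calculation reduces the target to
\[
3|\calC_1|+2|\calC_3|+|\calC_5|\le 4|S|+O(1),
\]
where $\calC_i$ is the family of size-$i$ odd components of $G'\setminus S$. To prove this I would form the bipartite auxiliary graph $H$ from $G'$ by deleting all internal edges of the size-$3$ and size-$5$ components, making $T:=V(\calC_1)\cup V(\calC_3)\cup V(\calC_5)$ an independent set with $\deg_H\ge 6$ on $V(\calC_1)$ and $\deg_H\ge 4$ on $V(\calC_3)$, and then invoke Lemma~\ref{lem:bipartite}.

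The crux, and the reason the conjecture appears to resist this approach, is that a vertex of $V(\calC_5)$ can have $\deg_H$ as low as $2$, precisely when its component is $K_5$ and every one of its five vertices has only two neighbours in $S$, which evades the degree-$3$ hypothesis of Lemma~\ref{lem:bipartite}. I would try to partition $V(\calC_5)$ into a good set of degree $\ge 3$ (absorbed by Lemma~\ref{lem:bipartite}) and a bad set of $K_5$-type vertices, and control the bad set via the same $K_7-e$ obstruction used above: each $K_5$-component together with its two $S$-neighbours forms a forbidden $K_7-e$ subgraph. Turning this local obstruction into the required linear bound $|\calC_5^{K_5}|\le O(|S|)$ is where the current techniques fall short, because distinct $K_5$-components may share their $S$-neighbours in complicated overlapping patterns and the obstruction only rules out that two such components share an identical pair. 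Resolving this seems to require either a strengthening of Lemma~\ref{lem:bipartite} that admits degree-$2$ vertices under a structural side-condition, or a fresh charging scheme on the 1-planar drawing that extracts additional charge from the internal crossing forced inside each $K_5$; devising such an argument is the main obstacle.
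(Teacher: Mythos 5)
This statement is stated in the paper as a \emph{conjecture}, and the paper offers no proof of it; in fact the authors explicitly explain why the method of Lemma~\ref{lem:oddmindeg5} breaks down at minimum degree~6. Your proposal is not a proof either, and to your credit you say so: the argument stalls at exactly the obstruction the authors describe. Concretely, after the (correct) reduction to showing $3|\calC_1|+2|\calC_3|+|\calC_5|\leq 4|S|+O(1)$, a size-5 component of $G'\setminus S$ that induces a $K_5$ contributes five vertices each of which may have only two neighbours in $S$, so these vertices violate the hypothesis $\deg(t)\geq 3$ of Lemma~\ref{lem:bipartite} and cannot be charged by it. Your proposed repair --- using the non-1-planarity of $K_7$ minus an edge to argue that no two such $K_5$-components share the same pair of $S$-neighbours --- does not yield a linear bound on the number of bad components, since $\Theta(|S|^2)$ distinct pairs are available; and the alternative of deleting all size-5 components outright degrades the bound to $\frac{n}{5}+O(1)$, which is too weak. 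So the proposal contains a genuine, acknowledged gap at the decisive step, and the statement remains unproven both in your write-up and in the paper.

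Two smaller remarks on the parts you did argue. First, the disposal of $|S|\leq 2$ leans on the external fact that no 7-vertex 1-planar graph attains $4n-8=20$ edges; this is true but is not established anywhere in the paper, so it would need a citation. Second, note that any eventual proof must also revisit Step~4 of the charging argument in Lemma~\ref{lem:bipartite} (or design a new charging scheme, as you suggest), because the constant $2$ in front of $|T_3|$ there is what forces the coefficient $\frac{1}{7}$ rather than something better; simply admitting degree-2 vertices under a side condition would still have to produce at least some positive charge for each of them.
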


One may wonder why the proof of Lemma~\ref{lem:oddmindeg5} cannot be
generalized to minimum degree 6.    The problem are components of
size 5 in $G\setminus S$.  If we remove these to obtain $G'$, then
we remove $5k$ vertices for $k$ odd components and cannot hope for
an upper bound better than $\frac{n}{5}+O(1)$ for $\odd(G\setminus S)-|S|$.
If we keep components of size 5 in $G'$, then each vertex $t$ of a component $C$ of
size 5 could have four neighbours in $C$, hence only two neighbours
in $S$, and Lemma~\ref{lem:bipartite} cannot be used. 

For minimum degree 7, we similarly have a graph without a perfect matching,
but only conjectures as to whether this is tight.

\begin{lemma}
\label{lem:mindeg7}
For any $N$,
there exists a simple 1-planar graph with minimum degree 7 and $n\geq N$ vertices for which any matching
has size at most $\frac{11}{23}n+\frac{12}{23}$.
\end{lemma}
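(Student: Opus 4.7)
The plan is to follow the template used in the constructions for minimum degrees $5$ and $6$: take a single ``hub'' vertex $v_s$, attach many disjoint gadgets of fixed odd size, and set $S=\{v_s\}$ in the Tutte--Berge formula. For the target bound I want each gadget to become an odd component of size $23$ in $G\setminus\{v_s\}$, because then
\[
\odd(G\setminus S)-|S| \;=\; \frac{n-1}{23}-1 \;=\; \frac{n-24}{23},
\]
and Theorem~\ref{thm:TB} immediately yields a maximum matching of size at most $\frac{1}{2}\bigl(n-(n-24)/23\bigr) = \frac{11n+12}{23}$, matching the claim.

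First I would choose $n\geq N$ with $n\equiv 1\pmod{23}$, create the vertex $v_s$, and partition the remaining $n-1$ vertices into $(n-1)/23$ blocks of size $23$. For each block $B$, I would insert edges inside $B$ and between $B$ and $v_s$ according to the 1-planar template drawn in Figure~\ref{fig:mindeg7}(c), chosen so that (i) every vertex of $B$ has degree at least $7$ in $G$, (ii) the induced drawing on $B\cup\{v_s\}$ is simple and 1-planar with $v_s$ on its outer face, and (iii) no vertex of $B$ is adjacent to any vertex outside $B\cup\{v_s\}$, so $B$ is a component of $G\setminus\{v_s\}$. The blocks are then glued around $v_s$ so that their drawings meet only at $v_s$, producing a simple 1-planar drawing of $G$. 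The degree of $v_s$ is (its degree in a single block) $\cdot(n-1)/23$, which exceeds $7$ once $n$ is large enough, so $G$ has minimum degree $7$. Setting $S=\{v_s\}$ makes $G\setminus S$ into exactly $(n-1)/23$ odd components of size $23$, and the calculation above gives the claimed matching bound.

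The hard part is the explicit design of the $23$-vertex gadget. A simple bigon-free 1-planar graph on $24$ vertices has at most $4\cdot 24-8=88$ edges, whereas minimum degree $7$ on the $23$ non-hub vertices already accounts for at least $\lceil 23\cdot 7/2\rceil=81$ edge-endpoints among them, leaving only a narrow margin for the edges to $v_s$ and for the slack demanded by bigon-freeness. The gadget is therefore very close to the 1-planar edge maximum, and producing a concrete drawing that is simultaneously 1-planar, simple, of minimum degree $7$ on $B$, and keeps $v_s$ on a shared outer face is a careful packing exercise; this is precisely what Figure~\ref{fig:mindeg7}(c) supplies. Once that figure is taken as given, verifying conditions (i)--(iii) and the $(n-24)/23$ computation is a routine inspection.
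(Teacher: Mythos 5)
Your proposal is correct and follows essentially the same route as the paper: a hub vertex $v_s$ with $(n-1)/23$ disjoint $23$-vertex gadgets, $S=\{v_s\}$, and the identical Tutte--Berge computation giving $\odd(G\setminus S)-|S|=\frac{n-24}{23}$. Like the paper, you defer the existence of the $23$-vertex gadget entirely to Figure~\ref{fig:conjecture}(c), so the only substantive content --- the explicit minimum-degree-$7$ simple $1$-planar drawing on $24$ vertices --- is taken on faith in both write-ups; your edge-count feasibility check is a nice extra sanity remark but not a construction.
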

\begin{proof}
Let $n\geq N$ be such that $n\equiv 1\bmod 23$.  Create one vertex $v_s$, and split the
remaining vertices into $(n-1)/23$ groups of 23 vertices each.  For each group, insert
edges to turn these 23 vertices, plus vertex $v_s$, into one a simple 1-planar graph 
of minimum degree 7, see Fig.~\ref{fig:conjecture}(c).  
Setting $S$ to be the single vertex $v_s$, we observe that each of the
$(n-1)/23$ groups become an odd component of $G\setminus S$. Hence
$$\odd(G\setminus S)-|S| = \frac{n-1}{23} - 1 = \frac{n-24}{23}.$$
Therefore any matching has size at most $\frac{11n+12}{23}$.
\end{proof}

The above example is best in the sense that any 1-planar simple graph with
minimum degree 7 has at least 24 vertices \cite{Bie19-mindeg7}.  However, exploiting
this to show that the bound in Lemma~\ref{lem:mindeg7} is tight remains an
open problem.  

\section{Open problems}
\label{sec:conclusion}

We leave some gaps in our analysis; in particular one wonders what
matching-bounds are tight for simple 1-planar graphs of
minimum degree 6 or 7.
Also, there are many other related graph classes that could be studied.
What, for example, is the size of matchings in 2-planar graphs of
minimum degree $\delta$?

1-planar graphs of higher connectivity are also worth exploring.
For $\delta=3,4$, our constructed graphs have connectivity $\delta$,
which is the best one can hope for.  But for $\delta\geq 5$ our
constructed graphs have a cutvertex.  Are there better matching-bounds
for 5-connected 1-planar graphs with minimum degree 5?  In contrast
to planar graphs, 5-connected 1-planar graphs do not necessarily
have a Hamiltonian path, and therefore not necessarily a {\em near-perfect
matching} of size $\lceil (n-1)/2 \rceil$ \cite{FHM+20}.
Do all 6-connected 1-planar graphs have a near-perfect matching?
How about all 7-connected ones?

\bibliographystyle{plain}
\bibliography{journal,full,gd,papers}

\end{document}